\documentclass[10pt,journal,compsoc]{IEEEtran}

\usepackage{algorithm,algorithmic,amsbsy,amsmath,amssymb,epsfig,bbm,mathrsfs,multirow,amsthm,epstopdf,url,amsfonts,textcomp,xcolor}
\usepackage{graphicx,caption,subcaption}%for three figures side-by-side
\usepackage{setspace}

\ifCLASSOPTIONcompsoc
  % IEEE Computer Society needs nocompress option
  % requires cite.sty v4.0 or later (November 2003)
  \usepackage[nocompress]{cite}
\else
  % normal IEEE
  \usepackage{cite}
\fi
\ifCLASSINFOpdf
  % \usepackage[pdftex]{graphicx}
  % declare the path(s) where your graphic files are
  % \graphicspath{{../pdf/}{../jpeg/}}
  % and their extensions so you won't have to specify these with
  % every instance of \includegraphics
  % \DeclareGraphicsExtensions{.pdf,.jpeg,.png}
\else
  % or other class option (dvipsone, dvipdf, if not using dvips). graphicx
  % will default to the driver specified in the system graphics.cfg if no
  % driver is specified.
  % \usepackage[dvips]{graphicx}
  % declare the path(s) where your graphic files are
  % \graphicspath{{../eps/}}
  % and their extensions so you won't have to specify these with
  % every instance of \includegraphics
  % \DeclareGraphicsExtensions{.eps}
\fi

 %for propositions

\newtheorem{lemma}{Lemma}
\newtheorem{theorem}{\textbf{\textsc{Theorem}}}
\newcommand{\defeq}{\stackrel{\mbox{{\tiny def}}}{=}}

\hyphenation{lists}

\begin{document}
\title{A Novel Mobile Edge Network Architecture with Joint Caching-Delivering and Horizontal Cooperation}
%\title{A Novel Joint Caching and Delivering Architecture for Future Mobile Edge Networks with Horizontal Cooperation}
\author{Yuris Mulya Saputra, Dinh Thai Hoang, Diep N. Nguyen, and Eryk Dutkiewicz% <-this % stops a space
\IEEEcompsocitemizethanks{\IEEEcompsocthanksitem The authors are with the School of Electrical and Data Engineering, University of Technology Sydney, Australia.\protect\\
E-mail: yurismulya.saputra@student.uts.edu.au, \hfil\break \{hoang.dinh, diep.nguyen, eryk.dutkiewicz\}@uts.edu.au.
}
\thanks{}}

\IEEEtitleabstractindextext{%
\begin{abstract}
Mobile edge caching/computing (MEC) has been emerging as a promising paradigm to provide ultra-high rate, ultra-reliable, and/or low-latency communications in future wireless networks. In this paper, we introduce a novel MEC network architecture that leverages the optimal joint caching-delivering with horizontal cooperation among mobile edge nodes (MENs). To that end, we first formulate the content-access delay minimization problem by jointly optimizing the content caching and delivering decisions under various network constraints (e.g., network topology, storage capacity and users' demands at each MEN). However, such strongly mutual dependency between the decisions creates a nested dual optimization that is proved to be NP-hard. To deal with it, we propose a novel transformation method to transform the nested dual problem to an equivalent mixed-integer nonlinear programming (MINLP) optimization problem. Then, we design a centralized solution using an improved branch-and-bound algorithm with the interior-point method to find the joint caching and delivering policy which is within 1\% of the optimal solution for small problem instances. Since the centralized solution requires the full network topology and information from all MENs, to make our solution scalable, we develop a distributed algorithm which allows each MEN to make its own decisions based on its local observations. Extensive simulations demonstrate that the proposed solutions can reduce the total average delay for the whole network up to 40\% compared with other current caching policies. Furthermore, the proposed solutions also increase the cache hit rate for the network by four times, thereby dramatically reducing the traffic load on the backhaul network.

\end{abstract}

% Note that keywords are not normally used for peerreview papers.
\begin{IEEEkeywords}
Mobile edge caching, horizontal cooperative caching, joint caching and delivering, branch-and-bound, delay, interior-point.
\end{IEEEkeywords}}

% make the title area
\maketitle

\IEEEdisplaynontitleabstractindextext

\IEEEpeerreviewmaketitle

%==========================================================================================
%==========================================================================================
\IEEEraisesectionheading{\section{Introduction}\label{sec:Int}}

\IEEEPARstart{T}{he} proliferation of smart devices (e.g., smartphones, tablets, wearable devices) and a massive demand for emerging services (e.g., IoT, mobile videos, augmented and virtual reality applications) will lead to an explosion of mobile data traffic in the near future~\cite{Cisco:2017}. Moreover, new services/applications (e.g., critical missions, safety, or real-time entertainment) often require much lower latency (e.g., in order of 10 milliseconds or less) and/or higher reliability (e.g., 99.999\%).  %As a result, heavy network congestion will occur, and thus the mobile users may suffer from severe quality of service degradation including long service delay and packet loss (e.g., poor video quality, freeze online game). 
To cope with this ever-increasing traffic demand and much more stricter quality of service, mobile edge caching/computing (MEC)~\cite{Zhang:2015,Mao:2017,Yang:2016,Li2:2018} has been proposed as one of the most effective solutions. The key idea of an MEC network is to distribute well-received contents as well as computing resources closer to the mobile users by deploying servers at the ``edge'' of the network, referred to as mobile edge nodes/servers (MENs). By doing so, the MEC network allows mobile users, instead of downloading contents from cloud/content servers (CSs), to access contents and resources from nearby MENs. As such, MEC helps to reduce the service latency and mitigate the network congestion on the backhaul link~\cite{Li:2018}. The deployment of MEC network also brings other significant benefits to the mobile users, e.g., reliable wireless connections, high speed data transfer, and low energy consumption, and reduces expensive operational as well as upgrading costs on the backhaul link for the MEC network providers.

Despite of the advantages, the development of MEC networks faces several inherent challenges, e.g., diverse users' demands, small coverage, and limited storage capacity of each MEN~\cite{Hoang:2018}. To overcome these issues, cooperative caching has been introduced recently. Cooperative caching is a method which leverages the distribution of contents~\cite{Ioannou:2016} through the cooperation among MENs and provides efficient workload distribution in a hierachical architecture~\cite{Tong:2016}. In~\cite{Wen:2017}, the authors introduce an optimal cooperative content placement strategy aiming to maximize the total hit rate of contents for a heterogeneous cellular network (referred to as the network hit rate). To achieve the optimal policy, the scheme first caches contents randomly at different layers, e.g., macro, pico, and femto nodes. Then, a stochastic geometry is deployed to evaluate the request hit rate/probability for the network. Based on the request hit rate evaluation together with the caching storage capacities of MENs, the maximum network hit rate then can be derived. Similar to~\cite{Ioannou:2016}, the authors in~\cite{Poularakis:2016} propose a distributed cooperative caching architecture consisting of several local nodes such as small cells or base stations to reduce the content delivery delay. In this work, the network providers collaborate together through sharing their MENs to avoid long service delay from the content servers. For example, users of network provider X can download contents from MENs of network provider Y. In~\cite{Borst2010Distributed}, a collaborative caching design based on a tree architecture is considered. In this way, if a requested content is not cached at the leaf node server, this node can download content from its directly connected parent node. Based on this architecture, the authors introduce a two-level cooperative caching group model to minimize the total bandwidth cost for the system. The work is then extended in~\cite{Shanmugam2013Femtocaching} with network topology taken into account.

To minimize the service delay, most of the above cooperative caching solutions focus only on content placement strategies (e.g., \cite{Li:2018}, \cite{Borst2010Distributed}, and \cite{Shanmugam2013Femtocaching}), but not on how content should be delivered/routed. Among few works that investigate both caching and delivering strategies, the authors in~\cite{Jiang:2017} introduce dual objective approaches to optimize cooperative caching and delivering policy for heterogeneous cellular networks (HetNets) with femtocells and D2D communications. As such, after obtaining the optimal caching strategy, the authors adopt an optimal content delivering strategy to find the best MEN to serve if there are more than one MENs caching the same requested content. Nonetheless, in this work, caching placement and delivery optimizations are separately optimized, leading to a suboptimal solution. A similar suboptimal solution is also found in \cite{Dehghan2015On}.%In~, a joint content caching and request routing in a hybrid network is investigated. In particular, the \emph{vertical} links (from each MEN to the CS) are used to download the cached contents. However, similar to \cite{Jiang:2017}, the routing strategy is optimized after the proposed algorithms obtain the content placement policy.

Moreover, for all the aforementioned works, the authors do not take into account the cooperations among MENs (referred to as \emph{horizontal cooperation}) in delivering requested contents. MENs are often launched in a close proximity where the horizontal (wireless or even wire) connections among them have much higher speed than that of the backhaul link or those from MENs to the CSs. This fact, if leveraged properly, facilitates the cooperation amongst MENs in jointly caching and delivering contents to not only reduce the service delay for mobile users but also improve the caching effectiveness (by minimizing the backhaul traffic to the CSs). 

In this paper, we introduce an optimal {\bf JO}int {\bf C}ooperative c{\bf A}ching and {\bf D}elivering framework (referred to as JOCAD) that enables MENs to cooperate in caching and delivering contents to mobile users. Specifically, we first propose a novel MEC network architecture in which MENs can be connected with each other directly (with high-speed connections). Then, given the content demand distributions (referred to as frequency-of-accesses) at MENs, various data sizes, diverse MENs' storage capacities, and network topology (i.e., connections among the MENs and their bandwidth), we aim to jointly address two essential questions: (1) how to place contents at the MENs efficiently and (2) how to choose the best routes to deliver requested contents. 

Due to the strongly mutual dependency between content placement and delivery decisions, the joint content caching and delivering optimization problem turns to be intractable. In particular, where to cache contents will be influenced by the delivering decisions. Likewise, the decision to deliver the contents will be impacted by the content placement strategy. Consequently, this inter-dependency gives rise to a nested dual optimization problem, that we prove to be NP-hard. To tackle it, we propose a novel transformation method to transform the nested dual problem to an equivalent mixed-integer nonlinear programming (MINLP) optimization problem. By exploiting the unique structure of this MINLP problem, we propose a centralized cooperative caching-delivering solution (referred to as centralized solution) using an improved branch-and-bound algorithm with the interior-point method (iBBA-IPM) to find the joint caching and delivering policy which is within 1\% of the optimal solution for small problem instances. 

Nevertheless, the centralized solution requires a full network topology and the information from all MENs, making it prohibitively costly for large-scale systems, especially when the number of contents and MENs is huge. Moreover, the centralized solution also requires coordination overhead from all MENs. To make our solution scalable and reduce the complexity and communication overheads among MENs, we develop a distributed cooperative caching{-delivering} algorithm (referred to as distributed solution) which allows each MEN to make its own decisions based on its local observations, e.g., connections to their neighbors. Extensive simulations demonstrate that the proposed solutions can reduce the total average delay for the whole network up to 40\% compared with the most FoA policy, and up to 25\% compared with locally optimal caching policy (i.e., without collaboration). Furthermore, the proposed solutions also increase the cache hit rate for the network by four times, thereby dramatically reducing the traffic load on the backhaul network.The major contributions are summarized as follows:

\begin{itemize}
\item We design the JOCAD framework with mutual dependency that utilizes the direct horizontal cooperations among MENs to minimize the total delay for the MEC network and reduce traffic load on the backhaul links. We show that the resulting optimization problem is NP-hard.

\item We then propose a novel transformation method to transform the intractable original optimization to an equivalent MINLP problem which we can adopt effective mathematical tools to address.

\item We develop the centralized solution using iBBA-IPM to find the joint caching and delivering policy for the MINLP problem which can achieve within 1\% of the optimal solution. 

\item We design the distributed solution to reduce the complexity of the centralized solution. Through simulations, we {demonstrate} that the distributed solution can achieve the performance close to that of the centralized solution.

\item We conduct extensive simulations to evaluate the efficiency of the proposed framework and solutions. These results also provide insightful information to help the MEC service providers tradeoff between the quality of service, e.g., delay, and the implementation costs, e.g., the number of MENs and storage capacity. 
\end{itemize}

The rest of this paper is organized as follows. Section~\ref{sec:SM} describes the proposed network architecture and model. Section~\ref{sec:PFT} discusses the problem formulation and the transformation method. The centralized and distributed cooperative caching{-delivering} solutions are in Section \ref{sec:PS} and Section \ref{sec:DS}, respectively. An illustrative case study is given in Section 6, and then Section 7 shows the comprehensive simulation results. Conclusions are then drawn in Section~\ref{sec:Conc}.

\begin{figure}[!t]
	\centering
	\includegraphics[scale=0.3]{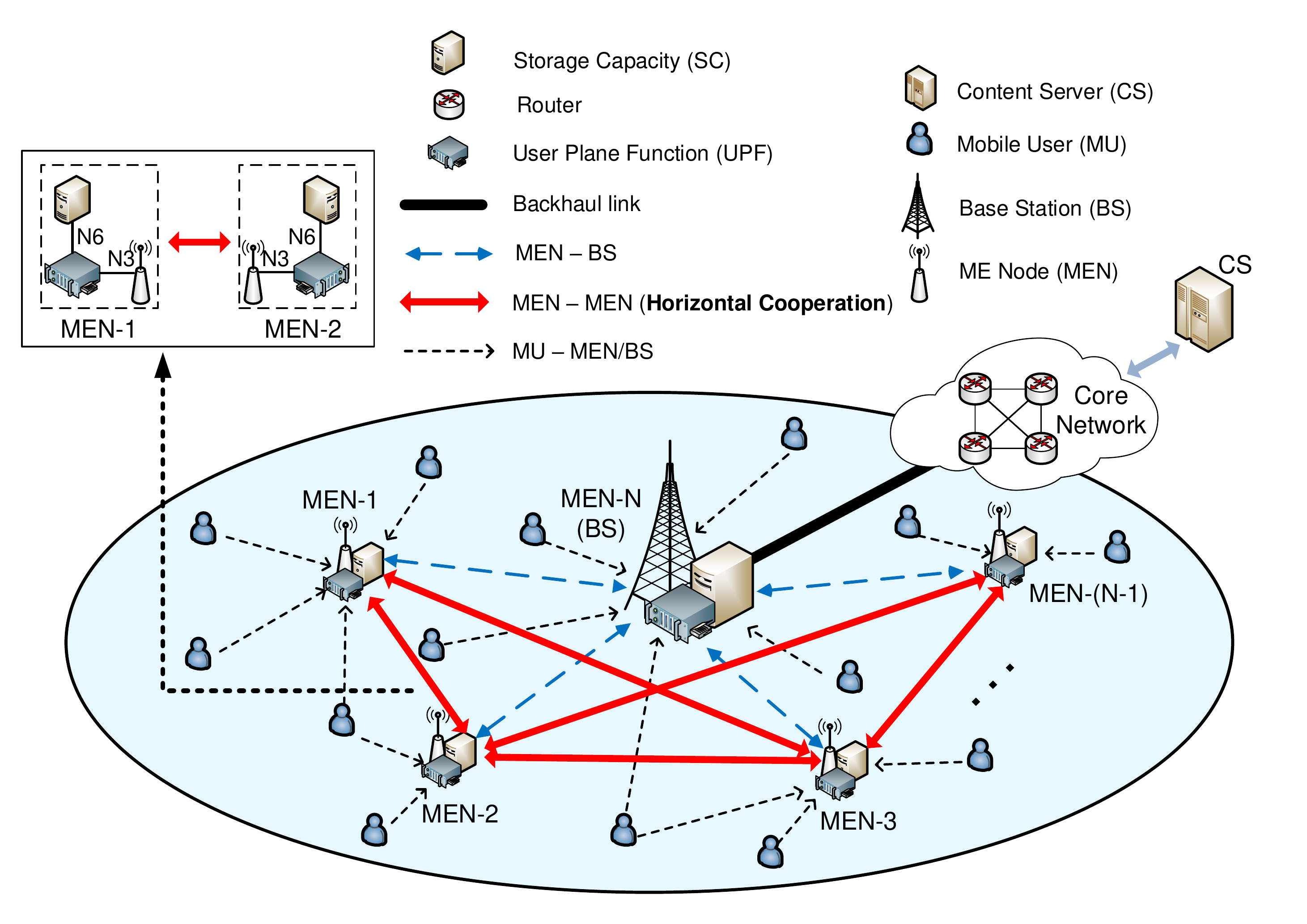}
	\caption{Proposed MEC network architecture with direct horizontal cooperation.}
	\label{fig:Edge_Architecture}
\end{figure}

%==========================================================================================
%==========================================================================================
\section{Mobile Edge Network Architecture with Horizontal Cooperation amongst MENs}
\label{sec:SM}
	
The proposed MEC network architecture with direct horizontal collaboration among the MENs is described in Fig.~\ref{fig:Edge_Architecture}. This architecture adopts a practical MEC model in 5G networks proposed by The European Telecommunications Standards Institute (ETSI)~\cite{Kekki:2018}. Each MEN in the network serves a set of mobile users in its coverage area and it can communicate with its nearby MENs (the BS is also considered as an MEN in the network) using either wireless (e.g., Wi-Fi) or wire connections. Additionally, each MEN is equipped with finite storage capacity to cache popular contents. In this case, each MEN can implement periodic updating strategies of the high-demand content placement within a day at off-peak hours, e.g., at night, to reduce traffic loads during peak hours~\cite{Yang:2016, Karamchandani:2016}. To route data traffic, e.g., content requests and content downloads, from/to the mobile users or other MENs, each MEN also utilizes a user plane function (UPF). This UPF is connected to the MEN and the storage capacity via N3 and N6 interfaces, respectively~\cite{Kekki:2018}. Moreover, the mobile users can move from one MEN's coverage to another MEN's area to create connections for content requests. 
%To find the best place (either from an MEN or the CS) to download a requested content, the BS provides a caching-delivering table for each MEN. This table is generated by finding the optimal caching and delivering policy for the network through the solution of nested dual optimization problem formulated in Section~\ref{sec:PFT}. 

When a content request is sent to an MEN, if the content is cached locally at the MEN, it will send the content to the user immediately through the UPF. If the content is not stored locally at the MEN but at its neighboring MENs (i.e., directly connected to the MEN), the MEN will download the content from the node which has the lowest delivery delay before sending the requested content to the user. Otherwise, the MEN will download the content from the CS via the BS{\footnote{In practice, the bandwidth between the BS and CS is usually much higher than among MENs \cite{Poularakis2:2016}. Thus, downloading the content from the CS via the BS will have lower delay than getting it through two MEN-MEN hops.}}. In this case, the BS requires to download the requested content from the CS through the 5G core network (CN)~\cite{Li:2017}. Note that if the MEN is the BS, it will check from MENs in the network first, and if there is no MEN storing this content, it will download the content from the CS via the 5G CN. In this way, the proposed model can leverage the direct horizontal cooperations among MENs to reduce content-access delay for the users as well as traffic load on the backhaul network. 

Let $\mathcal{N} = \{1,\ldots,n,\ldots,N\}$ denote the set of the MENs. The BS is denoted by MEN-$N$. Each MEN-$n$ has storage capacity and the number of mobile users in its coverage area denoted by $s_n$ and $U_n$, respectively. $B$ represents the bandwidth between the BS and the CS. We also define $l^m_n, \forall n,m \in \mathcal{N}, n\neq m$ as the bandwidth between MEN-$n$ (note that $B\gg l^m_n$) and MEN-$m$, and $b_n^u, \forall n \in \mathcal{N}$ as the allocated bandwidth of a user $u$ at MEN-$n$. Furthermore, we define $\mathcal{I} = \{1,\ldots,i,\ldots,I\}$ as the set of contents. Contents may have diverse data sizes denoted by $\mathcal{C} = \{c_1,\ldots,c_i,\ldots,c_I\}$. We denote frequency-of-access (FoA) of content $i$ at MEN-$n$ as $f_n^i$.

%==========================================================================================
%==========================================================================================
\section{Joint Cooperative Caching and Delivering Optimization Problem} \label{sec:PFT}
In this section, we first formulate the joint cooperative caching and delivering optimization problem to minimize the total average delay for the MEC network under various network constraints as aforementioned. This nested dual optimization problem is shown to be NP-hard. We then propose a method to transform the intractable nested dual optimization problem into the equivalent MINLP for which we can develop effective solutions. 

%====================================================
%====================================================
\subsection{Decision Variables and Problem Analysis}
\label{subsec:PF}

We define $\mathbf{x} \defeq [\mathbf{x}_1,\ldots,\mathbf{x}_n,\ldots,\mathbf{x}_N]^T$, where $\mathbf{x}_n = [x_n^1,\ldots,x_n^i,\ldots,x_n^I] $ and $x_n^i \in \{0,1\}$, as the binary decision vector of the MENs. The variable $x_n^i$ is defined as follows:
\begin{equation}
\label{eqn:prop_cond0}
\begin{aligned}
x_n^i = \quad
\left\{	\begin{array}{ll}
1, &\text{ if content $i$ is cached at MEN-$n$}, \\
0, &\text{ otherwise}.
\end{array}	\right.\\
\end{aligned}
\end{equation}
When a user $u$ at MEN-$n$ requests a content $i$, the following three cases are considered.

\subsubsection{Case 1: MEN-$n$ has the requested content} 
In this case, $x^i_n=1$, and thus the delay to download the content will be 
\begin{equation}
\label{eqn:prop_cond1}
\begin{aligned}
d_{case1}^* \defeq x^i_n d_\alpha,
\end{aligned}
\end{equation}
where $d_\alpha = \frac{c_i}{b_n^u}$.

\subsubsection{Case 2: MEN-$n$ does not have the requested content but at least one of its directly connected MENs caches this content} 
In this case, $x^i_n=0$. We denote $\mathcal{M}_n^i$ as the set of MENs which are directly connected to MEN-$n$ and store content $i$. Because there is at least one MEN in $\mathcal{M}_n^i$ caching the content $i$, we can derive:
\begin{equation}
\label{eqn:prop_cond2}
\begin{aligned}
x_{m^\dagger}^i\defeq\underset{{\substack{m\in \mathcal{M}_n^i}}}{\prod}(1 - x_{m}^i) = 0,
\end{aligned}
\end{equation}
where 
\begin{equation}
\label{eqn:prop_cond02}
\begin{aligned}
x_m^i = \quad
\left\{	\begin{array}{ll}
1, &\text{ if content $i$ is cached at MEN-$m$}, \\
   &\text{ $m \neq n$},\\
0, &\text{ otherwise}.
\end{array}	\right.\\
\end{aligned}
\end{equation}
As such, the MEN-$n$ will download content $i$ from one node which has the lowest delivery time, and thus the delay to download the content in this case will be:
\begin{equation}
\label{eqn:prop_cond3}
\begin{aligned}
&d_{case2}^* \defeq (1-x^i_n) (1-x_{m^\dagger}^i) (d_\alpha + d_\beta),
\end{aligned}
\end{equation}
where $d_\beta \defeq \underset{m}{\min}\Big([x_m^i + (1 - x_m^i)V] \frac{c_i}{l^m_n}\Big)$, and $V$ is a very large constant number. Here, $d_\beta$ represents the optimization problem to find an MEN-$m$ which contains requested content $i$ and has the lowest delivery time. In addition, the $V$ ensures that MENs without containing the requested content $i$ (i.e., $x_m^i=0$) are practically ignored. 
%Suppose that MEN $m=k_{in}$ is the MEN which has lowest delivery time. Thus, the delay to download the content becomes $(1-x^i_n) (1-x_{m^\dagger}^i) (d_\alpha + \frac{c_i}{l_{nk_{in}}})$.

It is worth noting that our model is designed to take the full advantage of direct horizontal cooperations among MENs. In fact, MENs are often deployed in a close proximity area. As such, the wire or wireless links among MENs are often in place \cite{Poularakis:2016,Ao:2015,Poularakis:2014} and usually much faster than the links from MENs to the BS. In particular, if one of its directly connected MENs stores content $i$, MEN-$n$ will download this content from one of these nodes instead of trying to download the content from the CS or from another MEN in the network via the BS. This strategy is to minimize the latency, the traffic on the backhaul network as well as inside the MEC network. 

\begin{figure*}[t]
	\normalsize
	\begin{equation}
	\label{eqn:form_problem1a0}	
	\begin{aligned}
	F(\mathbf{x}) &= \sum_{i=1}^I \sum_{n=1}^{N} f_n^i \Bigg[ \sum_{u=1}^{U_n} \bigg( d_{case1}^* + d_{case2}^* + d_{case3}^* \bigg)\Bigg] \\
	&= \sum_{i=1}^I \sum_{n=1}^{N} f_n^i \Bigg[ \sum_{u=1}^{U_n} \bigg( \underbrace{x_n^id_\alpha}_{\text{ \emph{Case 1}}} + \underbrace{ (1-x_n^i) (1 - x_{m^\dagger}^i) \bigg(d_{\alpha} + \min_{m}\Big([x_m^i + (1 - x_m^i)V] \frac{c_i}{l^m_n}\Big) \bigg)}_{\text{ \emph{Case 2}}} + \underbrace{(1-x_n^i) x_{m^\dagger}^i d_{\delta}}_{\text{ \emph{Case 3}}} \bigg)	\Bigg] \\
	&= \sum_{i=1}^I \sum_{n=1}^{N} f_n^i \Bigg[ \sum_{u=1}^{U_n} \bigg( x_n^i d_\alpha + (1 - x_n^i) \Big[(1 - x_{m^\dagger}^i) (d_\alpha + d_\beta) + x_{m^\dagger}^id_{\delta}\Big]\bigg)\Bigg] \\
	\end{aligned}
	\end{equation}
	\hrulefill
	\vspace*{-4pt}
\end{figure*}

Note that in the case MEN-$n$ is the BS, i.e., $n=N$, only \emph{Case 1} and \emph{Case 2} can happen as the BS is directly connected to the CS. Furthermore, we do not consider direct connections between MENs and the CS. This is due to the fact that the direct connection between an MEN and the CS often has a very low bandwidth capacity \cite{Poularakis2:2016}. Another reason is that deploying direct connections between MENs and the CS will cause significant infrastructure deployment and operational costs for the MEC service providers \cite{Dehghan2015On, Ao:2015, Ge:2014, Zhang:2018}.

\subsubsection{Case 3: MEN-$n$ and all of its directly connected nodes do not have the requested content}

In this case, we have $x^i_n=0$ and $x_{m^\dagger}^i=1$. Then, there are two possibilities. First, there is no MEN in the network storing the content. Then, the content will be downloaded from the CS via the BS, and the delay in this case will be 
\begin{equation}
\label{eqn:prop_cond4}
\begin{aligned}
&d_{case3}^* \defeq (1-x_n^i) x_{m^\dagger}^i d_{\delta},
\end{aligned}
\end{equation}
where $d_{\delta} = \Big(d_\alpha + \frac{c_i}{l^N_n} + \frac{c_i}{B}\Big)$. Second, if there is at least one MEN in the network that is not directly connected to MEN-$n$, say MEN-$m$ (with some abusing of notation), storing this content, MEN-$n$ will download the content from either the CS via the BS or from MEN-$m$ via the BS or any intermediate MEN (whichever has lower delivery delay). However, in practice, as aforementioned the bandwidth between the BS and CS is usually much higher than among MENs \cite{Poularakis2:2016}, and thus the content will be downloaded from the CS in the second case. Consequently, the delay to download the content in \emph{Case 3} will be $d_{\delta}$.

\subsection{Problem Formulation}
Based on the aforementioned analysis, we then formulate the joint cooperative caching and delivering optimization problem ($\mathbf{P}_1$) to minimize the total average delay of the MEC network as follows:
\begin{equation}
\label{eqn:form_problem1a}	
\begin{aligned}
(\mathbf{P}_1) \phantom{10} & \min_{\mathbf{x}} F(\mathbf{x}),
\end{aligned}
\end{equation}
\begin{eqnarray}
\text{s.t.} \quad \sum_{i=1}^Ix_n^ic_i \leq s_n, \forall n \in \mathcal{N},  \label{eqn:form_problem1b} \\
x_n^i, x_m^i \in \{0, 1\}, \forall n,m \in \mathcal{N},  n \neq m, \forall i \in \mathcal{I}, \label{eqn:form_problem1c}
\end{eqnarray}
{where the objective function $F(\mathbf{x})$ is defined in Eq.~(\ref{eqn:form_problem1a0}) as the total average delay at all MENs for all contents in the MEC network}.
The constraints~(\ref{eqn:form_problem1b}) guarantee that the total size of all cached contents does not exceed the storage capacity of each MEN. Additionally, the constraints~(\ref{eqn:form_problem1c}) specify that caching decision variables are binary. Based on $(\mathbf{P}_1)$, our optimization problem is considered to be a \emph{nested dual binary nonlinear programming}. In particular, binary decision variables are multiplied and used in both inner and outer minimization functions: (1) the outer level (OL) is the main objective function $F(\mathbf{x})$ to minimize the total average delay of the network and (2) the inner level (IL) is the optimal delivering decision problem, i.e.,
\begin{equation}
\label{eqn:form_problem1e}
\begin{aligned}	
F_{\emph{\mbox{IL}}}(\mathbf{x}) = d_\beta = \underset{m}{\min}\Big([x_m^i + (1 - x_m^i)V] \frac{c_i}{l^m_n}\Big).
\end{aligned}
\end{equation}
The IL will try to find a directly connected node, i.e., MEN-$m$, which minimizes the delivery time to a requesting mobile user as described in \emph{Case 2}. Furthermore, the OL and IL functions have a unique feature, i.e., \emph{mutual-dependency}. Specifically, on the one hand, where to cache contents (at MENs) will influence on how to deliver the contents. On the other hand, how to deliver contents (depending on the network topology) will impact on how/where contents should be cached at MENs. In this way, to address the OL problem, we need to solve the IL problem by obtaining the optimal value of $m$, i.e., which MENs to deliver. Furthermore, it is important to note that we cannot solve the IL problem to find the optimal values of $m$ separately because the optimal values of $m$ strongly depend on the variables $\mathbf{x}$ (i.e., which contents are cached at which MENs). Consequently, to minimize the total average delay for the network, we need to simultaneously address the joint caching and delivering optimization problem. In Lemma~\ref{lemma1}, we show that the optimization problem in Eq.~(\ref{eqn:form_problem1a}) is an NP-hard problem.
%under the form of $\min_{\mathbf{X}} \Big(\min_{\mathbf{X}} F^\ast(\mathbf{X})\Big)$%.

\begin{lemma}
\label{lemma1}
The nested dual optimization ($\mathbf{P}_1$) is NP-hard.
\end{lemma}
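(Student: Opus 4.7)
The plan is to establish NP-hardness by a polynomial-time reduction from the classical 0-1 Knapsack problem, which is well known to be NP-hard. The key observation is that even when we strip the ``delivering'' aspect of $(\mathbf{P}_1)$ away by choosing a trivial network topology, the caching decision alone already encodes a Knapsack instance. This also sidesteps the nested dual structure, since for a singleton network the inner minimization $d_\beta$ never activates.

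Concretely, I would recall the decision version of 0-1 Knapsack: given items $1,\ldots,I$ with positive integer weights $w_i$ and values $v_i$, a capacity $W$, and a threshold $V$, decide whether there exists $S \subseteq \{1,\ldots,I\}$ with $\sum_{i\in S} w_i \le W$ and $\sum_{i\in S} v_i \ge V$. From such a Knapsack instance I would construct an instance of $(\mathbf{P}_1)$ with $N=1$ (the single MEN playing the role of the BS), $U_1 = 1$, storage capacity $s_1 := W$, and content sizes $c_i := w_i$. Because there are no neighboring MENs, the set $\mathcal{M}_n^i$ is empty, Case~2 is impossible, and for each content $i$ only Case~1 (delay $d_\alpha$) or Case~3 (delay $d_\delta$) is incurred. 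Fixing a positive user bandwidth and the backhaul parameters so that $d_\delta - d_\alpha > 0$, I would then choose the FoA values $f_1^i$ so that $f_1^i(d_\delta - d_\alpha) = v_i$ (up to a common positive rational scaling of polynomial bit-length). Substituting into the objective gives
\begin{equation*}
F(\mathbf{x}) \;=\; \sum_{i=1}^{I} f_1^i\bigl[x_1^i d_\alpha + (1-x_1^i) d_\delta\bigr] \;=\; K \;-\; \sum_{i=1}^{I} v_i\, x_1^i,
\end{equation*}
where $K = \sum_i f_1^i d_\delta$ is a constant independent of $\mathbf{x}$. Hence minimizing $F(\mathbf{x})$ subject to $\sum_i c_i x_1^i \le s_1$ is equivalent to maximizing $\sum_i v_i x_1^i$ subject to $\sum_i w_i x_1^i \le W$, which is precisely 0-1 Knapsack. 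The threshold question ``is $F(\mathbf{x}) \le K - V$?'' therefore coincides with ``does a feasible packing of value $\ge V$ exist?''

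Finally, I would verify that the reduction is polynomial in the Knapsack input (the encoded parameters $N$, $I$, $c_i$, $s_1$, $f_1^i$, bandwidths, and $U_1$ all have polynomial bit-length) and that the decision version of $(\mathbf{P}_1)$ lies in NP, since a candidate caching vector $\mathbf{x}$ together with any delivery assignment can be evaluated in polynomial time. Combining these two facts yields NP-hardness (in fact, NP-completeness for the decision version). The main obstacle I expect is the bookkeeping that shows the chosen delays and FoAs can be realized with rational numbers of polynomial size while keeping $d_\delta - d_\alpha > 0$ and the scaling consistent across all items; once this is carried out, the reduction goes through cleanly, and the general problem, which additionally carries the nested inner minimization over delivery routes, is at least as hard.
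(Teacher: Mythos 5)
Your proof is correct and rests on the same core idea as the paper's: eliminate the nested inner minimization by specializing to an instance with no horizontal cooperation, so that each content is either served locally (delay $d_\alpha$) or fetched via the backhaul (delay $d_\delta$), and the residual caching problem under the capacity constraint (\ref{eqn:form_problem1b}) is a knapsack. The difference is in how the hardness of that special case is certified. The paper observes that the special case is a binary (integer) linear program and cites the NP-completeness of integer programming from Karp's list; read literally, that only places the special case \emph{inside} a hard problem class, which bounds its complexity from above rather than below. You instead carry out an explicit polynomial-time reduction \emph{from} 0-1 Knapsack \emph{to} the special case, choosing $c_i = w_i$, $s_1 = W$, and calibrating $f_1^i$ so that $f_1^i(d_\delta - d_\alpha) = v_i$, which is the logically correct direction and makes the argument self-contained. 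Two small points of bookkeeping: since $d_\delta - d_\alpha$ depends on $i$ through $c_i$, your scaling gives $f_1^i = v_i / \bigl(c_i(1/l_n^N + 1/B)\bigr)$, which is fine but worth stating; and with $N=1$ the lone node is the BS, whose miss delay in the paper's model is $d_\delta^{\ddagger} = d_\alpha + c_i/B$ rather than $d_\delta$ (alternatively, take $N=2$ with a zero-capacity BS) --- neither change affects the reduction, since all that matters is a strictly positive, efficiently computable hit/miss delay gap.
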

 
\begin{proof}
See Appendix~\ref{appx:lemma1}.
\end{proof}
	
%====================================================
%====================================================
\subsection{Problem Transformation}
\label{subsec:PT}

In ($\mathbf{P}_1$), we need to simultaneously optimize the content caching and delivering decisions. However, as discussed above, the decisions to cache and to deliver the contents are mutual dependent. Hence, the optimization problem is unfortunately not a typical bilevel optimization \cite{Sinha:2018} in which the inner and outer decision variables are independent. As such, conventional techniques used in the bilevel optimization are not applicable. To address this problem, we propose a novel method to transform the intractable nested dual optimization problem into an equivalent mixed-integer nonlinear programming (MINLP) optimization problem which can be solved by using effective mathematical tools.

Specifically, we introduce a vector of functions $Q(\mathbf{x})$ as well as vectors of auxiliary variables $\mathbf{z}$ and binary variables $\mathbf{y}$ to replace the IL problem $F_{\emph{\mbox{IL}}}(\mathbf{x})$, and add appropriate constraints to make the transformed problem to be equivalent. In this case, $Q(\mathbf{x})$ represents the IL problem. Particularly, $Q(\mathbf{x}) \defeq [Q_n(\mathbf{x}_1),\ldots,Q_n(\mathbf{x}_m),\ldots,Q_n(\mathbf{x}_{M_n})]^T$, where $Q_n(\mathbf{x}_m) \defeq [Q_n^1(x_m^1),\ldots,Q_n^i(x_m^i),\ldots,Q_n^I(x_m^I)]$ and $Q_n^i(x_m^i) \defeq [x_m^i + (1 - x_m^i)V] \frac{c_i}{l^m_n}$. Then, we define $\mathbf{z} = [\,\mathbf{z}_1,\ldots,\mathbf{z}_n,\ldots,\mathbf{z}_N]\,^T$ with $\mathbf{z}_n = [\,z_n^1,\ldots,z_n^i,\ldots,z_n^I]\,$, where $z_n^i \in \mathbb{R}_0^+$, as the vector of $F_{\emph{\mbox{IL}}}(\mathbf{x})$ solutions for all contents and MENs. Furthermore, we denote $\mathbf{y} = [\,\mathbf{y}_1,\ldots,\mathbf{y}_m,\ldots,\mathbf{y}_N]\,^T$ with $\mathbf{y}_m = [\,y_m^1,\ldots,y_m^i,\ldots,y_m^I]\, , y_m^i \in \{0,1\},$ and $m \neq n$, as the binary decision vector to help MEN-$n$, $\forall n \in \mathcal{N}$ to find another directly connected MEN-$m$, $m \in \mathcal{M}_n^i$, which has the shortest time to deliver requested content $i$.

Given the vectors $Q(\mathbf{x})$, $\mathbf{z}$, and $\mathbf{y}$, the new equivalent optimization problem ($\mathbf{P}_2$), known as MINLP problem, can be expressed as follows:
\begin{equation}
(\mathbf{P}_2) \phantom{5} \underset{\{{\mathbf{x},\mathbf{y},\mathbf{z}}\}}{\min} F(\mathbf{x},\mathbf{y},\mathbf{z}), \label{eqn:prop_sol2a}
\end{equation}
\begin{eqnarray}
\text{ s.t. (\ref{eqn:form_problem1b})-(\ref{eqn:form_problem1c}) and} \nonumber \\ 
z_n^i \ge Q_n^i(x_{m}^i) - V y_{m}^i, \forall n, m \in \mathcal{N}, m \neq n, \forall i \in \mathcal{I}, \label{eqn:prop_sol2c} \\
\sum_{m \in \mathcal{M}_n^i} y_{m}^i = M_n^i-1, \forall n \in \mathcal{N}, \forall i \in \mathcal{I},\label{eqn:prop_sol2d} \\
y_{m}^i \in \{0, 1\}, \forall m \in \mathcal{N}, \forall i \in \mathcal{I}, \label{eqn:prop_sol2e} \\
z_n^i \in \mathbb{R}_0^+, \forall n \in \mathcal{N}, \forall i \in \mathcal{I}, \label{eqn:prop_sol2f} 
\end{eqnarray}
where 
\begin{equation}
\label{eqn:prop_sol2a0}
\begin{aligned}
F(\mathbf{x},\mathbf{y},\mathbf{z})  &= \sum_{i=1}^I \sum_{n=1}^{N} f_n^i \Bigg[ \sum_{u=1}^{U_n} \bigg( x_n^i d_\alpha + \\
&(1 - x_n^i) \Big[(1 - x_{m^\dagger}^i) (d_\alpha + z_n^i) + x_{m^\dagger}^id_{\delta}\Big]\bigg)\Bigg],\\
\end{aligned}
\end{equation} 
and $M_n^i$ is the cardinality of the set $\mathcal{M}_n^i$. The constraints (\ref{eqn:prop_sol2c}) represent the condition to select one MEN-$m$ which has the lowest delivery time. In addition, the aim of constraints (\ref{eqn:prop_sol2d}) and (\ref{eqn:prop_sol2e}) are to guarantee that only one variable $y_m^i$ is set to be ``0'', while the rest of variables are set to be ``1''. Specifically, $M_n^i-1$ indicates that when $M_n^i$ number of directly connected MENs containing content $i$ for MEN-$n$ are considered, we exclude one node which has $y_m^i = 0$ (i.e., the selected MEN-$m$ to deliver the content $i$). The equivalent transformation is formally stated in Theorem \ref{theorem1}.

\begin{theorem}\label{theorem1}
The nested dual optimization problem ($\mathbf{P}_1$) is equivalent to MINLP optimization problem ($\mathbf{P}_2$).
\end{theorem}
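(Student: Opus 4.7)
The plan is to establish the equivalence by constructing value-preserving mappings between feasible solutions of $(\mathbf{P}_1)$ and $(\mathbf{P}_2)$. The storage constraint~(\ref{eqn:form_problem1b}) and the binary caching constraint~(\ref{eqn:form_problem1c}) are shared by both problems, so the core of the argument is to show that the auxiliary variables $(\mathbf{y},\mathbf{z})$ together with constraints~(\ref{eqn:prop_sol2c})--(\ref{eqn:prop_sol2f}) faithfully emulate the embedded minimization $F_{\text{IL}}(\mathbf{x})=d_\beta$ appearing in $(\mathbf{P}_1)$; once this is done, substituting into~(\ref{eqn:prop_sol2a0}) reproduces~(\ref{eqn:form_problem1a0}) pointwise, and the two optimal values coincide.

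For the forward direction $(\mathbf{P}_1)\Rightarrow(\mathbf{P}_2)$, given a feasible $\mathbf{x}$ for $(\mathbf{P}_1)$, for every pair $(n,i)$ with $\mathcal{M}_n^i\neq\emptyset$ I would pick $m^\star(n,i)\in\arg\min_{m\in\mathcal{M}_n^i}Q_n^i(x_m^i)$, set $y_{m^\star}^i=0$ and $y_m^i=1$ for all other $m$, and define $z_n^i=Q_n^i(x_{m^\star}^i)$. Constraint~(\ref{eqn:prop_sol2d}) is then satisfied by construction, constraint~(\ref{eqn:prop_sol2c}) holds with equality at $m^\star$ and is slack elsewhere (for sufficiently large $V$), and substituting into~(\ref{eqn:prop_sol2a0}) the Case~2 term $(1-x_n^i)(1-x_{m^\dagger}^i)(d_\alpha+z_n^i)$ reproduces the Case~2 contribution of~(\ref{eqn:form_problem1a0}) exactly. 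For $(n,i)$ with $\mathcal{M}_n^i=\emptyset$, the multiplier $(1-x_{m^\dagger}^i)=0$, so $z_n^i$ drops out of the objective and can be set to any non-negative value without altering $F$.

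For the reverse direction $(\mathbf{P}_2)\Rightarrow(\mathbf{P}_1)$, the key observation is that $z_n^i$ enters $F(\mathbf{x},\mathbf{y},\mathbf{z})$ only through the coefficient $f_n^i U_n (1-x_n^i)(1-x_{m^\dagger}^i)\ge 0$. Hence at any optimum, whenever this coefficient is strictly positive (the Case~2 regime), the minimization over $\mathbf{z}$ drives $z_n^i$ down to the tightest lower bound permitted by~(\ref{eqn:prop_sol2c}). Constraint~(\ref{eqn:prop_sol2d}) forces exactly one index $m^\star\in\mathcal{M}_n^i$ to carry $y_{m^\star}^i=0$, so the only binding inequality in~(\ref{eqn:prop_sol2c}) is $z_n^i\ge Q_n^i(x_{m^\star}^i)$; the outer minimization over $\mathbf{y}$ then selects $m^\star$ that minimizes $Q_n^i(x_{m^\star}^i)$, yielding $z_n^i=\min_{m\in\mathcal{M}_n^i}Q_n^i(x_m^i)=d_\beta$. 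Substituting back shows $F(\mathbf{x},\mathbf{y},\mathbf{z})=F(\mathbf{x})$, so the optimal values and optimal caching vectors $\mathbf{x}$ of the two problems match.

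The main technical obstacle I anticipate is the careful calibration of the big-$V$ constant. Because $Q_n^i(x_m^i)$ already contains $V$ as a multiplicative penalty when $x_m^i=0$, the relaxation term $-Vy_m^i$ in~(\ref{eqn:prop_sol2c}) does not automatically deactivate the constraint at the ``wrong'' indices (those with $x_m^i=0$ and $y_m^i=1$), and the proof must argue that $V$ can be chosen large enough that at optimum $y_m^i=0$ is assigned only to indices in $\mathcal{M}_n^i$ (i.e.\ with $x_m^i=1$) whenever such indices exist. A secondary subtlety is the $\mathbf{x}$-dependence of $\mathcal{M}_n^i$ itself in~(\ref{eqn:prop_sol2d}): the summation must be interpreted as vacuous (and $z_n^i$ unconstrained by it) in the edge case $\mathcal{M}_n^i=\emptyset$, which is precisely the Case~3 regime where $z_n^i$ drops out of the objective, so the transformation remains consistent provided this convention is stated explicitly.
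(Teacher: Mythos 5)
Your argument follows essentially the same route as the paper's: for each fixed $\mathbf{x}$, constraint~(\ref{eqn:prop_sol2d}) forces exactly one $y_{m}^i=0$ per $(n,i)$, the big-$V$ term deactivates the remaining inequalities in~(\ref{eqn:prop_sol2c}), and the minimization over $(\mathbf{y},\mathbf{z})$ squeezes $z_n^i$ down to $\min_{m\in\mathcal{M}_n^i}Q_n^i(x_m^i)=d_\beta$, so the two objectives coincide pointwise in $\mathbf{x}$ --- the paper phrases this as two implications between optimal solutions rather than as value-preserving feasibility maps, but the mechanism is identical. The two subtleties you flag, namely that $-Vy_m^i$ does not automatically dominate $Q_n^i(x_m^i)$ at indices with $x_m^i=0$ and that $\mathcal{M}_n^i=\emptyset$ must be treated as the vacuous Case-3 regime where $z_n^i$ drops out of the objective, are genuine and are also left implicit in the paper's proof, so identifying them is a strength of your write-up rather than a gap.
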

 
\begin{proof}
See Appendix~\ref{appx:theorem1}.
\end{proof}

Based on ($\mathbf{P}_2$), we can simplify some parts of the objective function in Eq.~(\ref{eqn:prop_sol2a}) as
\begin{equation}
\label{eqn:prop_sol20a}
\begin{aligned}
\Omega \big(x_n^i,x_m^i\big) &\defeq  x_n^id_\alpha+ (1-x_n^i) (1 - x_{m^\dagger}^i) d_{\alpha} \\
& + (1-x_n^i) x_{m^\dagger}^i d_{\delta},
\end{aligned}
\end{equation}
\begin{equation}
\label{eqn:prop_sol20b}
\begin{aligned}
\Phi \big(x_n^i,x_m^i\big) \defeq (1-x_n^i) (1 - x_{m^\dagger}^i),
\end{aligned}
\end{equation}
the constraints (\ref{eqn:form_problem1b}) as
\begin{equation}
\label{eqn:prop_sol20c}
\begin{aligned}
\Theta \big(x_n^i\big) \defeq {\underset{i=1}{\overset{I}{\sum}}}x_n^ic_i - s_n,
\end{aligned}
\end{equation}
and the constraints (\ref{eqn:prop_sol2d}) as
\begin{equation}
\label{eqn:prop_sol20d}
\begin{aligned}
\Gamma \big(y_{m}^i\big) \defeq {\underset{{m \in \mathcal{M}_n^i}}{\sum}} y_{m}^i - (M_n^i-1).
\end{aligned}
\end{equation}
{Then, we have}
\begin{equation}
\label{eqn:prop_sol30a0}
\begin{aligned}
&F(\mathbf{x},\mathbf{y},\mathbf{z}) = \\
&\sum_{i=1}^I \sum_{n=1}^{N} f_n^i \Bigg[ \sum_{u=1}^{U_n} \bigg(\Omega \big(x_n^i,x_m^i\big) + \Phi \big(x_n^i,x_m^i\big)z_n^i\bigg)\Bigg],
\end{aligned}
\end{equation}
and the MINLP problem $(\mathbf{P}_2)$ becomes
\begin{equation}
\label{eqn:prop_sol30a}
\begin{aligned}
(\mathbf{P}_3) \phantom{5} & \min_{\{\mathbf{x},\mathbf{y},\mathbf{z}\}} F(\mathbf{x},\mathbf{y},\mathbf{z}),
\end{aligned}
\end{equation}
\begin{eqnarray}
\text{ s.t. (\ref{eqn:form_problem1c}), (\ref{eqn:prop_sol2e})-(\ref{eqn:prop_sol2f}), and} \nonumber \\ 
\Theta \big(x_n^i\big) \leq 0, \forall n \in \mathcal{N}, \label{eqn:prop_sol3b} \\
Q_n^i \big(x_m^i\big) - V y_{m}^i - z_n^i \leq 0, \nonumber \\
\forall n, m \in \mathcal{N}, m \neq n, \forall i \in \mathcal{I}, \label{eqn:prop_sol3d} \\
\Gamma \big(y_{m}^i\big) = 0, \forall n \in \mathcal{N}, \forall i \in \mathcal{I}. \label{eqn:prop_sol3e}
\end{eqnarray}

%==========================================================================================
%==========================================================================================
\section{Centralized Cooperative Caching{-Delivering} Solution}
\label{sec:PS}

To address ($\mathbf{P}_3$), in this section, we introduce an improved branch-and-bound algorithm which can effectively achieve the near-optimal joint caching and delivering policy for the whole network.

%====================================================
%====================================================
\subsection{Improved Branch-and-Bound Algorithm with Interior-Point Method}
\label{sec:BBIPM}

%In ($\mathbf{P}_3$), we need to find the optimal solution of many binary variables to minimize the total average delay in the whole network. Generally, the existing techniques (e.g., relaxing approach) cannot take the benefit of binary variables to find the optimal solution as its optimal decision variables are non-integer values, which are inappropriate to implement in our framework. In addition, for heuristic searching approach, there are $2^\nu$ (where $\nu$ is the number of binary variables) decisions for the problem. To tackle these issues, 
We adopt branch-and-bound algorithm (BBA)~\cite{Belotti:2009} which reduces the time complexity and leverages the characteristics of binary variables to find the final solution of ($\mathbf{P}_3$). The BBA has been commercially popular (in CPLEX by IBM~\cite{IBM:2016} or in SBB by GAMS~\cite{SBB:2009}) and it can work very effectively to solve integer programming problems. Nonetheless, to solve the non-linearity and continuous relaxation of ($\mathbf{P}_3$), we employ the interior-point method (IPM) which has polynomial time complexity~\cite{Karmarkar:1984}. The IPM is considered to be robust and efficient method to address function evaluations and second derivative information for large-scale and sparse nonlinear problem~\cite{Byrd:1999} (as implemented in IPOPT~\cite{Wachter:2006} and in KNITRO~\cite{Byrd:2006}. Overall, both BBA and IPM can be integrated to handle binary and continuous variables of ($\mathbf{P}_3$) efficiently.

Given the aforementioned BBA and IPM, we develop a two-level approach to effectively address ($\mathbf{P}_3$). In particular, the BBA is used as outer-level approach which creates continuous nonlinear subproblems by relaxing binary constraints of ($\mathbf{P}_3$), and then the IPM is applied as inner-level approach to solve the subproblems. In this way, the optimal solutions of subproblems attained from the IPM may not be integral (some variables do not have binary values). Thus, the BBA is then used again to find the feasible (integral) solutions until the final solution of ($\mathbf{P}_3$) is found.    

To find the final solution of ($\mathbf{P}_3$) using the BBA, we recall binary variable vectors $\mathbf{x}$ and $\mathbf{y}$, and non-integer variable vector $\mathbf{z}$. Given $I$ number of contents and $N$ number of MENs, the number of variables of vector $\mathbf{x}$, $\mathbf{y}$, and $\mathbf{z}$ are  $J_{\mathbf{x}} = J_{\mathbf{y}} = J_{\mathbf{z}} = I \times N$. 
The BBA first relaxes all binary variables $x_n^i$, $x_m^i$, and $y_m^i$ of ($\mathbf{P}_3$) into continuous variables at the root problem ($\mathbf{RP}$). In the ($\mathbf{RP}$), the relaxed binary variables are bounded by $0 \leq x_n^i, x_m^i, y_m^i \leq 1$. Then, the ($\mathbf{RP}$) can be expressed as follows:
%and $\varphi \defeq B - \mathcal{V}$
\begin{equation}
\label{eqn:prop_sol3a}
\begin{aligned}
(\mathbf{RP}) \phantom{5} &\min_{\{\mathbf{x},\mathbf{y},\mathbf{z}\}}  F_{\mathbf{RP}}(\mathbf{x},\mathbf{y},\mathbf{z}),
\end{aligned}
\end{equation}
\begin{eqnarray}
\text{ s.t. (\ref{eqn:prop_sol3b})-(\ref{eqn:prop_sol3e}) and} \nonumber \\ 
%\Theta \big(x_n^i\big) \leq 0, \forall n \in \mathcal{N}, \forall i \in \mathcal{I}, \label{eqn:prop_sol3b} \\
%Q_{in} \big(x_m^i\big) - V y_{m}^i - z_n^i \leq 0, \forall n, m \in \mathcal{N}, m \neq n, \forall i \in \mathcal{I}, \label{eqn:prop_sol3d} \\
%\Gamma \big(y_{m}^i\big) = 0, \forall m \in \mathcal{N}, \forall i \in \mathcal{I}, \label{eqn:prop_sol3e} \\
%0 \leq x_n^i, x_m^i, y_m^i \leq 1, z_n^i \in \mathbb{R}_0^+, \forall n, m \in \mathcal{N}, m \neq n, \forall i \in \mathcal{I}. \label{eqn:prop_sol3f}
0 \leq x_n^i, x_m^i, y_m^i \leq 1, z_n^i \in \mathbb{R}_0^+, \nonumber \\
\forall n, m \in \mathcal{N}, m \neq n, \forall i \in \mathcal{I}, \label{eqn:prop_sol3f}
\end{eqnarray}
{where $F_{\mathbf{RP}}(\mathbf{x},\mathbf{y},\mathbf{z}) = F(\mathbf{x},\mathbf{y},\mathbf{z})$} in Eq.~(\ref{eqn:prop_sol30a0}).

If all variables $x_n^i$, $x_m^i$, and $y_m^i$ are binary values (i.e., feasible solution is obtained), the algorithm will stop immediately. Otherwise, it will break the ($\mathbf{RP}$) into subproblems ($\mathbf{SP}$s), i.e., branch problems. In this case, the ($\mathbf{SP}$)s fix one of relaxed decision variables (i.e., $x_n^i$, $x_m^i$, or $y_m^i$) to be ``0" at the left branch and ``1" at the right branch. We denote the fixed decision variable to be $\zeta \in \{x_n^i, x_m^i, y_m^i\}$ with $F_{\mathbf{SP}}(\mathbf{x},\mathbf{y},\mathbf{z}) = F_{\mathbf{RP}}(\mathbf{x},\mathbf{y},\mathbf{z})$.
%\begin{equation}
%\label{eqn:prop_sol4a}
%\begin{aligned}
%&\min_{\mathbf{x},\mathbf{y},\mathbf{z}}  F_{\mathbf{SP}}(\mathbf{x},\mathbf{y},\mathbf{z}) = \\
%&\min_{\mathbf{x},\mathbf{y},\mathbf{z}} \sum_{i=1}^I \sum_{n=1}^{N} f_n^i \Bigg[ \sum_{u=1}^{U_n} \bigg(\Omega \big(x_n^i,x_m^i\big) + \Phi \big(x_n^i,x_m^i\big)z_n^i\bigg)\Bigg], \\
%& \text{ s.t. (\ref{eqn:prop_sol3b})-(\ref{eqn:prop_sol3f}) and}
%\end{aligned}
%\end{equation}
%\begin{eqnarray}
%a \in \{x_{n^*}^{i^*}, x_{m^*}^{i^*}, y_{m^*}^{i^*}\} \in\{0,1\}, \label{eqn:prop_sol4b} \\
%\forall n, m \in \mathcal{N}, m \neq n, \forall i \in \mathcal{I}. \label{eqn:prop_sol4f}
%\end{eqnarray}
%\begin{equation}
%\label{eqn:prop_sol4b}
%\begin{aligned}
%a \in \{x_{n^*}^{i^*}, x_{m^*}^{i^*}, y_{m^*}^{i^*}\} \in\{0,1\},
%\end{aligned}
%\end{equation}

In the BBA, each iteration does not need to search all branch problems (or leaves). Instead, ($\mathbf{SP}$) is pruned if one of these two following conditions is met: (1) $\Psi^c > \beta_U$ or (2) $\xi < \tau$. In this case, $\Psi^c$ and $\beta_U$ refer to the current total average delay and the upper bound of the total average delay. Meanwhile, $\xi$ and $\tau$ represent the integrality between relaxed and rounded decision variables $G$, where $G = \{x_n^i, x_m^i, y_m^i\}$, and integrality gap threshold, respectively. Then, the final solutions ${\hat x}_n^i$, ${\hat x}_m^i$, ${\hat y}_m^i$, and ${\hat z}_n^i$ with $\forall n, m \in \mathcal{N}, m \neq n, \forall i \in \mathcal{I}$ are obtained if the total average delay ${\hat \Psi} = \Psi \leq \Psi^c$ for all $x_n^i$, $x_m^i$, $y_m^i$, and $z_n^i$ in the problem, where $\Psi$ refers to the incumbent total average delay. To guarantee that the near-optimal solution exists, we set an optimality tolerance $\eta$ as a non-negative value. Specifically, ${\hat x}_n^i$, ${\hat x}_m^i$, ${\hat y}_m^i$, and ${\hat z}_n^i$ are $\eta-$optimal when the total average delay ${\hat \Psi}$ is tightened within the bounds and the difference between upper bound $\beta_U$ and lower bound $\beta_L$ is less than the optimality tolerance as expressed below:
\begin{equation}
\label{eqn:conv_an1}
\beta_L \leq {\hat \Psi} \leq \beta_U,
\end{equation}
\begin{equation}
\label{eqn:conv_an2}
	\beta_U - \beta_L \leq \eta.
\end{equation}

While the BBA handles the feasibility of the ($\mathbf{SP}$)s and the optimality of ($\mathbf{P}_3$), the interior-point method (IPM) is used to solve the nonlinear continuous relaxation of the ($\mathbf{SP}$)s. In particular, we derive an interior-point subproblem ($\mathbf{IP}$) from the ($\mathbf{SP}$) as the approximated problem with additional logarithmic barrier functions and non-negative slack variables to eliminate the inequality operators such that:
\begin{equation}
\label{eqn:prop_sol5a0}
\begin{aligned}
&F_{\mathbf{IP_\gamma}}(\mathbf{x},\mathbf{y},\mathbf{z},\boldsymbol{\sigma}_1,\boldsymbol{\sigma}_2,\boldsymbol{\sigma}_3,\boldsymbol{\sigma}_4)  = \\
&F_{\mathbf{SP}}(\mathbf{x},\mathbf{y},\mathbf{z}) - \gamma\bigg[\sum_{j_1=1}^N \text{log } \sigma_1^{j_1}+ \sum_{j_2=1}^{M} \text{log } \sigma_2^{j_2} \\
&+ \sum_{j_3=1}^{J_{\mathbf{x}}+J_{\mathbf{y}}} \text{log } \sigma_3^{j_3} + \sum_{j_4=1}^{J_{\mathbf{x}}+J_{\mathbf{y}}}\text{log } \sigma_4^{j_4}\bigg], \\
\end{aligned}
\end{equation}
and
\begin{equation}
\label{eqn:prop_sol5a}
\begin{aligned}
(\mathbf{IP}) \phantom{5} &\min_{\{\substack{\mathbf{x},\mathbf{y},\mathbf{z},\\ \boldsymbol{\sigma}_1,\boldsymbol{\sigma}_2,\boldsymbol{\sigma}_3,\boldsymbol{\sigma}_4}\}}  F_{\mathbf{IP_\gamma}}(\mathbf{x},\mathbf{y},\mathbf{z},\boldsymbol{\sigma}_1,\boldsymbol{\sigma}_2,\boldsymbol{\sigma}_3,\boldsymbol{\sigma}_4), 
\end{aligned}
\end{equation}
\begin{eqnarray}
\text{s.t.} \quad
%\Theta \big(x_n^i\big) + \sigma_1^j = 0, \forall n \in \mathcal{N}, \forall i \in \mathcal{I}, j \in [1, N], \label{eqn:prop_sol5b} \\
% Q_{in} \big(x_m^i\big) - V y_m^i - z_n^i + \sigma_2^j = 0, \forall n, m \in \mathcal{N}, m \neq n, \forall i \in \mathcal{I}, j \in [1, J_{\mathbf{x}}+J_{\mathbf{z}}], \label{eqn:prop_sol5c} \\
%\Gamma \big(y_m^i\big) = 0, \forall m \in \mathcal{N}, \forall i \in \mathcal{I}, \label{eqn:prop_sol5d} \\
%G - \sigma_3^j = 0, G + \sigma_4^j = 1, G = \{x_n^i, x_m^i, y_m^i\}, \forall n, m \in \mathcal{N}, m \neq n, \forall i \in \mathcal{I}, j \in [1, J_{bin}], \label{eqn:prop_sol5e} \\
%a \in \{x_{n^*}^{i^*}, x_{m^*}^{i^*}, y_{m^*}^{i^*}\} \in\{0,1\}, z_n^i \in \mathbb{R}_0^+, \label{eqn:prop_sol5f} \\
%\sigma_1^j,\sigma_2^j,\sigma_3^j,\sigma_4^j \geq 0, \label{eqn:prop_sol5g}
\Theta \big(x_n^i\big) + \sigma_1^{j_1} = 0, \forall n,j_1 \in \mathcal{N}, \label{eqn:prop_sol5b} \\
 Q_n^i \big(x_m^i\big) - V y_m^i - z_n^i + \sigma_2^{j_2} = 0, \forall j_2 \in [1, M], \nonumber \\
\forall n, m \in \mathcal{N}, m \neq n, \forall i \in \mathcal{I}, \label{eqn:prop_sol5c} \\
\Gamma \big(y_m^i\big) = 0, \forall n \in \mathcal{N}, \forall i \in \mathcal{I},\label{eqn:prop_sol5d} \\
G - \sigma_3^{j_3} = 0, G + \sigma_4^{j_4} = 1,  \nonumber \\
\forall j_3, j_4 \in [1, J_{\mathbf{x}}+J_{\mathbf{y}}], \nonumber \\
\forall n, m \in \mathcal{N}, m \neq n, \forall i \in \mathcal{I}, \label{eqn:prop_sol5e} \\
z_n^i \in \mathbb{R}_0^+, \forall n \in \mathcal{N}, \forall i \in \mathcal{I}, \label{eqn:prop_sol5f}
%\sigma_1^j,\sigma_2^j,\sigma_3^j,\sigma_4^j \geq 0, \label{eqn:prop_sol5g}
\end{eqnarray}
where $M \defeq {\underset{i=1}{\overset{I}{\sum}}}{\underset{n=1}{\overset{N}{\sum}}}M_n^i$, $\gamma > 0$ is the barrier parameter \cite{Benson:2011}, $\boldsymbol{\sigma}_1, \boldsymbol{\sigma}_2 \in \mathbb{R}_0^+$ are slack variables in the inequality constraints (\ref{eqn:prop_sol3b}) and (\ref{eqn:prop_sol3d}), respectively, and $\boldsymbol{\sigma}_3, \boldsymbol{\sigma}_4 \in \mathbb{R}_0^+$ are slack variables for lower and upper bounds of $G$ in the inequality constraints (\ref{eqn:prop_sol3f}), respectively. In this case, we define $\boldsymbol{\sigma}_1 \defeq [\sigma_1^1,\ldots,\sigma_1^{j_1},\ldots,\sigma_1^{N}]$, $\boldsymbol{\sigma}_2 \defeq [\sigma_2^1,\ldots,\sigma_2^{j_2},\ldots,\sigma_2^{M}]$, $\boldsymbol{\sigma}_3 \defeq [\sigma_3^1,\ldots,\sigma_3^{j_3},\ldots,\sigma_3^{J_{\mathbf{x}}+J_{\mathbf{y}}}]$, and $\boldsymbol{\sigma}_4 \defeq [\sigma_4^1,\ldots,\sigma_4^{j_4},\ldots,\sigma_4^{J_{\mathbf{x}}+J_{\mathbf{y}}}]$. To make $F_{\mathbf{IP_\gamma}}(\mathbf{x},\mathbf{y},\mathbf{z},\boldsymbol{\sigma}_1,\boldsymbol{\sigma}_2,\boldsymbol{\sigma}_3,\boldsymbol{\sigma}_4)$ equal to the minimum $ F_{\mathbf{SP}}(\mathbf{x},\mathbf{y},\mathbf{z})$, we need to update $\gamma$ in decreasing order for each iteration such that it converges to zero \cite{Waltz:2005}. To solve the approximated problem $F_{\mathbf{IP_\gamma}}(\mathbf{x},\mathbf{y},\mathbf{z},\boldsymbol{\sigma}_1,\boldsymbol{\sigma}_2,\boldsymbol{\sigma}_3,\boldsymbol{\sigma}_4)$, a conjugate gradient technique \cite{Steihaug:1983} is adopted to minimize a quadratic approximation of the approximated problem at each step. Let $\mathcal{L}(\mathbf{p},\boldsymbol{\sigma},\boldsymbol{\lambda})$ denote the Lagrangian expression containing the objective function in Eq.~(\ref{eqn:prop_sol5a}) along with the corresponding constraints~(\ref{eqn:prop_sol5b})-(\ref{eqn:prop_sol5e}) and a vector of non-negative Lagrangian multipliers $\boldsymbol{\lambda}$, where $\boldsymbol{\lambda}$ contains $\lambda_1^{j_1}, \forall j_1 \in \mathcal{N}, \lambda_2^{j_2}, \forall j_2 \in [1,M], \lambda_3^{j_3}, \forall j_3 \in [1, J_{\mathbf{x}}+J_{\mathbf{y}}], \lambda_4^{j_4}, \forall j_4 \in [1, J_{\mathbf{x}}+J_{\mathbf{y}}]$, and $\lambda_5^{j_5}, \forall j_5 \in [1, J_{\mathbf{z}}]$.
%$\boldsymbol{\lambda} = (\boldsymbol{\lambda}_1, \boldsymbol{\lambda}_2, \boldsymbol{\lambda}_3,\boldsymbol{\lambda}_4, \boldsymbol{\lambda}_5)$ with $\boldsymbol{\lambda}_1 \defeq [\lambda_1^1,\ldots,\lambda_1^{j_1},\ldots,\lambda_1^{N}]$, $\boldsymbol{\lambda}_2 \defeq [\lambda_2^1,\ldots,\lambda_2^{j_2},\ldots,\lambda_2^{M}]$, $\boldsymbol{\lambda}_3 \defeq [\lambda_3^1,\ldots,\lambda_3^{j_3},\ldots,\lambda_3^{J_{\mathbf{x}}+J_{\mathbf{y}}}]$, $\boldsymbol{\lambda}_4 \defeq [\lambda_4^1,\ldots,\lambda_4^{j_4},\ldots,\lambda_4^{J_{\mathbf{x}}+J_{\mathbf{y}}}]$, and $\boldsymbol{\lambda}_5 \defeq [\lambda_5^1,\ldots,\lambda_5^{j_5},\ldots,\lambda_5^{J_{\mathbf{z}}}]$. 
Then, $\mathcal{L}(\mathbf{p},\boldsymbol{\sigma},\boldsymbol{\lambda})$ can be expressed by
\begin{equation}
\label{eqn:prop_sol6}
\begin{aligned}
&\mathcal{L}(\mathbf{p},\boldsymbol{\sigma},\boldsymbol{\lambda}) = F_{\mathbf{SP}}(\mathbf{p}) - \gamma\bigg[\sum_{j_1=1}^N \text{log } \sigma_1^{j_1}+ \sum_{j_2=1}^{M} \text{log } \sigma_2^{j_2} \\
&+ \sum_{j_3=1}^{J_{\mathbf{x}}+J_{\mathbf{y}}} \text{log } \sigma_3^{j_3} + \sum_{j_4=1}^{J_{\mathbf{x}}+J_{\mathbf{y}}}\text{log } \sigma_4^{j_4}\bigg] + \sum_{j_1=1}^N\lambda_1^{j_1}\bigg(\Theta \big(x_n^i\big) + \sigma_1^{j_1}\bigg)\\
&+ \sum_{j_2=1}^{M} \lambda_2^{j_2}\bigg(Q_n^i \big(x_m^i\big) - V y_m^i - z_n^i + \sigma_2^{j_2}\bigg) \\
&+\sum_{j_3=1}^{J_{\mathbf{x}}+J_{\mathbf{y}}}\lambda_3^{j_3}\bigg(G - \sigma_3^{j_3}\bigg) + \sum_{j_4=1}^{J_{\mathbf{x}}+J_{\mathbf{y}}}\lambda_4^{j_4}\bigg(G + \sigma_4^{j_4} - 1\bigg) \\
&+ \sum_{j_5=1}^{J_{\mathbf{z}}}\lambda_5^{j_5} \Gamma \big(y_m^i\big),
\end{aligned}
\end{equation}
where $\mathbf{p} = (\mathbf{x},\mathbf{y},\mathbf{z})$ and $\boldsymbol{\sigma} = (\boldsymbol{\sigma}_1,\boldsymbol{\sigma}_2,\boldsymbol{\sigma}_3,\boldsymbol{\sigma}_4)$. We first need to obtain the Lagrangian multipliers by solving the following equation before using the conjugate gradient:
\begin{equation}
\label{eqn:prop_sol7}
\begin{aligned}
\nabla_\mathbf{p} \mathcal{L}(\mathbf{p},\boldsymbol{\sigma},\boldsymbol{\lambda}) &= \nabla_\mathbf{p} F_{\mathbf{SP}}(\mathbf{p}) + \sum_{j_1=1}^N\lambda_1^{j_1} \nabla_\mathbf{p} \Theta \big(x_n^i\big) \\
&+ \sum_{j_2=1}^{M} \lambda_2^{j_2} \nabla_\mathbf{p} \bigg(Q_n^i \big(x_m^i\big) - V y_m^i - z_n^i\bigg) \\
&+\sum_{j_3=1}^{J_{\mathbf{x}}+J_{\mathbf{y}}}\lambda_3^{j_3} \nabla_\mathbf{p}G  + \sum_{j_4=1}^{J_{\mathbf{x}}+J_{\mathbf{y}}}\lambda_4^{j_4} \nabla_\mathbf{p}G  \\
&+ \sum_{j_5=1}^{J_{\mathbf{z}}}\lambda_5^{j_5} \nabla_\mathbf{p} \Gamma \big(y_m^i\big) = 0.
\end{aligned}
\end{equation}
Then, we can use a step size ($\Delta{\mathbf{p}}, \Delta{\boldsymbol{\sigma}}$) iteratively to minimize the quadratic approximation such that:
\begin{equation}
\label{eqn:prop_sol8a}
\begin{aligned}
\min_{\{\Delta{\mathbf{p}}, \Delta{\boldsymbol{\sigma}}\}}  F(\Delta{\mathbf{p}}, \Delta{\boldsymbol{\sigma}}),
\end{aligned}
\end{equation}
\begin{eqnarray}
\text{s.t.} \quad
%\Theta \big(x_n^i\big) + \mathbf{W}_1\Delta \mathbf{p} + \Delta \mathbf{\sigma} = 0, \forall n \in \mathcal{N}, \forall i \in \mathcal{I}, \label{eqn:prop_sol8b} \\
%Q_{in} \big(x_m^i\big) - V y_m^i - z_n^i + \mathbf{W}_2\Delta \mathbf{p} + \Delta \mathbf{\sigma} = 0, \forall n, m \in \mathcal{N}, m \neq n, \forall i \in \mathcal{I}, \label{eqn:prop_sol8c} \\
%\Gamma \big(y_m^i\big) + \mathbf{W}_3\Delta \mathbf{p} = 0, \forall m \in \mathcal{N}, \forall i \in \mathcal{I}, \label{eqn:prop_sol8d}
\Theta \big(x_n^i\big) + \mathbf{W}_1\Delta{\mathbf{p}} + \Delta{\boldsymbol{\sigma}} = 0, \forall n \in \mathcal{N},\label{eqn:prop_sol8b} \\
Q_n^i \big(x_m^i\big) - V y_m^i - z_n^i + \mathbf{W}_2\Delta{\mathbf{p}} +  \Delta{\boldsymbol{\sigma}} = 0, \nonumber \\ 
\forall n, m \in \mathcal{N}, m \neq n, \forall i \in \mathcal{I}, \label{eqn:prop_sol8c} \\
G + \mathbf{W}_3\Delta{\mathbf{p}} + \Delta{\boldsymbol{\sigma}} = 0, G - 1 + \mathbf{W}_4\Delta{\mathbf{p}} + \Delta{\boldsymbol{\sigma}}= 0,  \nonumber \\
\forall n, m \in \mathcal{N}, m \neq n, \forall i \in \mathcal{I}, \label{eqn:prop_sol8e} \\
\Gamma \big(y_m^i\big) + \mathbf{W}_5\Delta{\mathbf{p}} = 0, \forall n \in \mathcal{N}, \forall i \in \mathcal{I}, \label{eqn:prop_sol8d} \\
\end{eqnarray}
where 
\begin{equation}
\label{eqn:prop_sol8ab}
\begin{aligned}
F(\Delta{\mathbf{p}}, \Delta{\boldsymbol{\sigma}}) &=\nabla_\mathbf{p} F_{\mathbf{SP}}(\mathbf{p})^\textrm{T}\Delta{\mathbf{p}} \\
&+ \frac{1}{2}\Delta{\mathbf{p}}^\textrm{T}\nabla_\mathbf{p}^2 \mathcal{L}(\mathbf{p},\boldsymbol{\sigma},\boldsymbol{\lambda}) \Delta{\mathbf{p}} \\
&+ \gamma \mathbf{e}^\textrm{T}\mathbf{S}^{-1}\Delta{\boldsymbol{\sigma}} + \frac{1}{2}\Delta{\boldsymbol{\sigma}}^\textrm{T}\mathbf{S}^{-1}\mathbf{A}\Delta{\boldsymbol{\sigma}}, 
\end{aligned}
\end{equation}
$\mathbf{e}$ is a vector of ones corresponding to sizes of $\Theta \big(x_n^i\big)$, $\big(Q_n^i \big(x_m^i\big) - V y_m^i - z_n^i\big)$, and $G$, $\mathbf{S}$ and $\mathbf{A}$ are the diagonal matrices with the elements of $\boldsymbol{\sigma}$ and $\boldsymbol{\lambda}$ on the diagonal, respectively. Furthermore, $\mathbf{W} = (\mathbf{W}_1, \mathbf{W}_2, \mathbf{W}_3, \mathbf{W}_4,\mathbf{W}_5)$ represents the Jacobian of the constraint functions $\Theta \big(x_n^i\big)$, $\big(Q_n^i \big(x_m^i\big) - V y_m^i - z_n^i\big)$, $G$, and $\Gamma \big(y_m^i\big)$.

\begin{algorithm}[]
\caption{Centralized Solution with iBBA-IPM} \label{MINLP-iBBA-IPM}

\begin{algorithmic}[1]

\STATE $\mathcal{R}_a$: the set of $r_a$ active problems, $r_c$: the current problem

\STATE Set $G = \{x_n^i, x_m^i, y_m^i \} \in [0,1] \text{, } z_n^i \in \mathbb{R}_0^+\text{, } \forall n,m \in \mathcal{N} \text{, } m \neq n\text{, }\forall i \in \mathcal{I}$

\STATE $r_c \leftarrow r_a \leftarrow 1 $ /* Set root problem */

\STATE $\beta_L \leftarrow -\infty$, $\beta_U \leftarrow +\infty$, $\Psi \leftarrow \beta_U$ %/* set initial upper and lower bound */

\STATE Set $\tau$ and $\eta$ /* Integrality gap \& optimality tolerance */

\STATE $\text{Solve }(\mathbf{RP})$ for $r_c$

\IF {$\forall x_n^i, x_m^i, y_m^i \in \{0,1\}$}

\STATE Store ${\hat x}_n^i, {\hat x}_m^i, {\hat y}_m^i, {\hat z}_n^i$, and ${\hat \Psi}$, $\forall n,m \in \mathcal{N} \text{, }\forall i \in \mathcal{I}$

\RETURN /* Prune all active problems */
	
\ENDIF

\WHILE{$\mathcal{R}_a \neq \varnothing \text{ and } \beta_U - \beta_L > \eta$}

\STATE $r_c \leftarrow r_a$ /* an ($\mathbf{SP}$) based on the depth-first search */

\STATE $r_a \leftarrow r_a - 1$  /* Remove the ($\mathbf{SP}$) from the set */

\STATE $\text{Solve subproblem }(\mathbf{IP})$ for $r_c$

\IF{$\text{($\mathbf{IP})$ for $r_c$ is infeasible}$}

\STATE Prune $r_c$, {\bf exit}

\ELSE

\STATE Set current $G$ and $z_n^i$, $\forall n,m \in \mathcal{N} \text{, }\forall i \in \mathcal{I}$

\STATE Calculate $\xi \leftarrow |G - \text{round}(G)|$ %/* Find a non-binary $x^c_j$ */

%\STATE  /* a non-binary $x^c_j$ is empty */

\IF{$\xi < \tau$}

\IF{$\Psi^c < \Psi$}

\STATE Store current $G$ and $z_n^i$, $\forall n,m \in \mathcal{N} \text{, }\forall i \in \mathcal{I}$

\STATE Set $\Psi \leftarrow \Psi^c$ and $\beta_U \leftarrow \Psi^c$

\ENDIF

\STATE Prune $r_c$, {\bf exit}

\ELSE

\STATE Choose $\zeta \in G$ %/* Select a fractional variable */ 

\STATE $r_a \leftarrow r_a + 2$  /* Add 2 ($\mathbf{SP})$s to the set */

\STATE Update  $\beta_L \leftarrow \Psi^c$ 

\STATE Update the constraints with $\zeta \leftarrow 0$, $\zeta \leftarrow 1$, {\bf exit}

\ENDIF

\ENDIF

%\IF{$\beta_U - \beta_L \leq \eta$}
	
\STATE Store ${\hat x}_n^i, {\hat x}_m^i, {\hat y}_m^i, {\hat z}_n^i$, and ${\hat \Psi}$, $\forall n,m \in \mathcal{N} \text{, }\forall i \in \mathcal{I}$

%\STATE {\bf break} /* The algorithm terminates */

%\ENDIF

\ENDWHILE

\end{algorithmic}
\end{algorithm}

%====================================================
%====================================================
\subsection{Algorithm Complexity and Convergence Analysis of Improved BBA-IPM}
\label{sec:BBIPM-AC}

Algorithm \ref{MINLP-iBBA-IPM} describes the pseudocode of the improved BBA-IPM (iBBA-IPM) solution. First, we evaluate the complexity of the algorithm. The complexity of Algorithm \ref{MINLP-iBBA-IPM} is the lowest when the final solutions ${\hat x}_n^i, {\hat x}_m^i, {\hat y}_m^i, {\hat z}_n^i$ with $\forall n, m \in \mathcal{N}, m \neq n, \forall i \in \mathcal{I}$ and the final total average delay ${\hat \Psi}$ are found at ($\mathbf{RP}$). However, the complexity becomes higher when ($\mathbf{SP}$)s are active at left and right branches of the iBBA-IPM's tree. In particular, we apply a \emph{depth-first search} method by selecting the latest created ($\mathbf{SP}$) as the next ($\mathbf{SP}$)~\cite{Smith:1979} to find the $t$-th total average delay, i.e., $\Psi_t$. 
%This selection is obtained based on the following condition: next subproblem  {\color{blue}$\omega$ in depth $\rho$, i.e., ($\mathbf{SP}_{\omega}^{\rho}$)}, is selected only if its total average delay is lower than the minimum total average delay of previous subproblems {\color{blue} $\upsilon$ (where $0 \leq \upsilon < \omega$) in depth $\rho$, i.e., ($\mathbf{SP}_{\upsilon}^{\rho}$)}, as expressed by
%\begin{equation}
%\label{eqn:prop_sol8}
%\begin{aligned}
%F_{\mathbf{SP}_{\omega}^{\rho}}(\mathbf{x},\mathbf{y},\mathbf{z}) < \min_{0 \leq \upsilon < \omega} F_{\mathbf{SP}_{\upsilon}^{\rho}}(\mathbf{x},\mathbf{y},\mathbf{z}).
%\end{aligned}
%\end{equation}
%Then, the probability that ($\mathbf{SP}_{\omega}^{\rho}$) is selected is
%\begin{equation}
%\label{eqn:prop_sol9}
%\begin{aligned}
%&P\Big(F_{\mathbf{SP}_{\omega}^{\rho}}(\mathbf{x},\mathbf{y},\mathbf{z}) < \min_{0 \leq \upsilon < \omega}F_{\mathbf{SP}_{\upsilon}^{\rho}}(\mathbf{x},\mathbf{y},\mathbf{z})\Big) = \\
%&\sum_{t=1}^T \bigg[P\Big(F_{\mathbf{SP}_{\omega}^{\rho}}(\mathbf{x},\mathbf{y},\mathbf{z}) = \Psi_t\Big) . \\
%&P\Big(F_{\mathbf{SP}_{\upsilon}^{\rho}}(\mathbf{x},\mathbf{y},\mathbf{z}) > \Psi_t, 0 \leq \upsilon < \omega\Big)\bigg] \\
%&= \sum_{t=1}^T \bigg[q_{\rho}(\Psi_t) . (1-\ell_{\rho}(\Psi_t))^\omega\bigg],
%\end{aligned}
%\end{equation}
%where {\color{blue} $\Psi_t$ is the $t$-th total average delay
Suppose that $T$ is the expected number of total average delays $\Psi_t, \forall t \in [1,T]$, searched in the iBBA-IPM's tree, given that the first solution is found at depth $\varrho$. This $T$ is bounded above and below by linear functions of $\varrho$ to achieve polynomial complexity. This is formally described in Theorem~\ref{theorem1b}.
\begin{theorem}\label{theorem1b} 
Given that the first solution is found at depth $\varrho$ in the iBBA-IPM's tree, the expected number of total average delays searched in the iBBA-IPM's tree, i.e., $T$, follows polynomial complexity at the depth $\varrho$ when $T$ is bounded above and below by linear functions of $\varrho$.
\end{theorem}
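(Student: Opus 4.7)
The plan is to verify the two structural hypotheses of the theorem, namely that the expected count $T$ admits linear lower and upper bounds in $\varrho$, and then to observe that any linear function is a first-degree polynomial, from which the polynomial-complexity conclusion follows. Formally, I would aim to show $\varrho + 1 \leq T \leq (\kappa+1)\varrho + 1$ for a constant $\kappa$ independent of $\varrho$.

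I would first dispatch the lower bound. Because the depth-first search starts at the root $(\mathbf{RP})$ and must descend exactly $\varrho$ levels before the first incumbent total average delay $\Psi_1$ is produced, the algorithm necessarily generates and solves at least $\varrho + 1$ subproblems along that root-to-node path. Hence $T \geq \varrho + 1$, which is a linear lower bound in $\varrho$.

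For the upper bound, I would exploit the binary structure of the branching tree together with the pruning rules in Algorithm~\ref{MINLP-iBBA-IPM}. Along the root-to-first-solution path, each node at depth $k \in \{0,1,\ldots,\varrho-1\}$ spawns exactly two children: one lies on the path and the other is an unexplored sibling subproblem. After $\Psi_1$ is recorded, the depth-first backtracking revisits these $\varrho$ siblings in last-in-first-out order. The bound-based prune $\Psi^c > \beta_U$ and the integrality-gap prune $\xi < \tau$ in Algorithm~\ref{MINLP-iBBA-IPM} then eliminate each sibling subtree after at most a constant number $\kappa$ of additional IPM evaluations, where $\kappa$ depends only on the optimality tolerance $\eta$, the integrality gap threshold $\tau$, and the IPM convergence rate \cite{Karmarkar:1984}, but not on $\varrho$. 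Summing across the $\varrho$ backtracked siblings yields $T \leq (\kappa+1)\varrho + 1$.

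The main obstacle is justifying rigorously that each sibling subtree is exhausted in $O(1)$ IPM evaluations; the naive worst case for branch-and-bound is exponential. I would address this by combining two observations: (i) after $\Psi_1$ has been set, $\beta_U$ is finite and non-increasing, so each subsequent relaxed optimum $\Psi^c$ that is forced to grow as additional binary variables are fixed will eventually exceed $\beta_U$ at a depth determined by problem-independent tolerances; and (ii) the IPM itself returns the relaxed minimum in polynomial time, so the cost per visited subproblem is bounded uniformly. Together these give the claimed constant $\kappa$. Combining the two bounds produces $T = \Theta(\varrho)$, and since a linear function is a polynomial of degree one, $T$ follows polynomial complexity at depth $\varrho$, completing the argument.
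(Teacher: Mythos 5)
Your lower bound is correct and coincides with the paper's: the root-to-first-solution path contains $\varrho+1$ nodes, so $T \ge 1+\varrho$. The genuine gap is in your upper bound. You need each backtracked sibling subtree to be exhausted after at most a constant number $\kappa$ of node evaluations, and neither of your supporting observations delivers this. Observation (ii) --- that the IPM solves each relaxation in polynomial time --- bounds the \emph{cost per node}, not the \emph{number of nodes}, which is what $T$ counts. Observation (i) --- that relaxed optima are nondecreasing as variables are fixed and must eventually exceed $\beta_U$ --- gives no depth bound independent of $\varrho$ or of the instance: after the first incumbent, $\beta_U = \Psi_1$ may be far above the relaxation value at a sibling node, and the relaxation value can increase arbitrarily slowly along a branch, so a single sibling subtree can require a number of evaluations that grows with the problem dimension. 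This is precisely why worst-case branch-and-bound is exponential; no deterministic argument from the pruning tests in Algorithm~\ref{MINLP-iBBA-IPM} alone yields your constant $\kappa$.

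The paper avoids this by arguing per level in expectation, following the random-tree analysis of depth-first branch-and-bound in \cite{Smith:1979} and \cite{Smith:1984}: writing $T_\rho$ for the expected number of nodes searched at depth $\rho$, it takes from that model the facts $T_0 = 1$, $T_\rho \ge 1$, and $T_0 \le T_1 \le \cdots \le T_\varrho \le \varsigma$ for a constant $\varsigma$, and then simply sums $T = 1+\sum_{\rho=1}^{\varrho} T_\rho$ to obtain $1+\varrho \le T \le 1+\varsigma\varrho$. In other words, the bounded per-level constant $\varsigma$ is imported as a hypothesis of the stochastic branch-and-bound model rather than derived from the algorithm's pruning rules. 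If you want to keep your combinatorial sibling-counting route, you must either invoke that same model to bound the expected size of each sibling subtree or state an explicit assumption to that effect; as written, the step ``each sibling subtree is eliminated after at most $\kappa$ additional evaluations'' is unsupported and the upper bound does not go through.
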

\begin{proof}
See Appendix~\ref{appx:theorem2}.
\end{proof}

Next, we show that the iBBA-IPM algorithm converges after a finite number of steps under the optimality tolerance $\eta > 0$ of the optimization problem in ($\mathbf{P}_2$). This is formally stated in Theorem \ref{theorem3}. 
\begin{theorem}\label{theorem3}
The iBBA-IPM algorithm converges after a finite number of steps under the optimality tolerance $\eta > 0$. Specifically, there exists step $\theta_{\eta} \in \mathbb{N}$ for any $\eta > 0$ such that:
\begin{equation}
\label{eqn:prop_sol12}
\begin{aligned}
\beta_U^{(\theta_\eta)} - \beta_L^{(\theta_\eta)} \leq \eta,
\end{aligned}
\end{equation}
and $\beta_U^{(\theta_\eta)}$ is within the optimality tolerance of the $F(\mathbf{x},\mathbf{y},\mathbf{z})$.
\end{theorem}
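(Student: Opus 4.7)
The plan is to establish finiteness by combining three finiteness facts: the branch-and-bound tree has finitely many nodes, each relaxation $(\mathbf{IP})$ is solved to a prescribed accuracy by IPM in finitely many Newton steps, and the sequences of bounds are monotone and bounded so their gap is driven below $\eta$ in finitely many iterations. I would organize the argument around the evolution of the pair $(\beta_L^{(\theta)}, \beta_U^{(\theta)})$ across outer iterations $\theta$.

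First, I would bound the size of the branch-and-bound tree. Since every variable fixed by branching belongs to the binary set $G = \{x_n^i, x_m^i, y_m^i\}$ of cardinality $2J_{\mathbf{x}} + J_{\mathbf{y}}$ (the $\mathbf{z}$ variables are handled purely within the continuous relaxation), and the depth-first search in Algorithm \ref{MINLP-iBBA-IPM} fixes one such variable per branching, the tree has at most $2^{2J_{\mathbf{x}}+J_{\mathbf{y}}}$ leaves. Thus the loop $\mathcal{R}_a \neq \varnothing$ can execute at most that many times, giving a finite horizon $\Theta_{\max}$ on outer iterations. Second, I would invoke standard IPM convergence (as in \cite{Byrd:1999, Wachter:2006}): for each relaxation $(\mathbf{IP})$ with barrier parameter $\gamma \searrow 0$, the Newton/conjugate-gradient iterates solving Eq.~(\ref{eqn:prop_sol6})--(\ref{eqn:prop_sol8ab}) produce a KKT point of the continuous relaxation in finitely many inner steps to any prescribed tolerance, so each node of the tree is processed in finite time.

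Next I would track the bounds. By construction, $\beta_U^{(\theta)}$ is updated only when an integral incumbent with $\Psi^c < \Psi$ is found (lines 22--24), so $\beta_U^{(\theta+1)} \leq \beta_U^{(\theta)}$; and since $F(\mathbf{x},\mathbf{y},\mathbf{z}) \geq 0$ by its definition in Eq.~(\ref{eqn:prop_sol30a0}) (all delays $d_\alpha, d_\beta, d_\delta$ and frequencies $f_n^i$ are non-negative), the sequence $\{\beta_U^{(\theta)}\}$ is bounded below. Likewise $\beta_L^{(\theta)}$ is the infimum of optimal relaxation values over still-active subproblems; as subproblems are pruned, this infimum is non-decreasing, and it is bounded above by any feasible value of $F$, hence by the current $\beta_U^{(\theta)}$. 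Since $(\mathbf{P}_3)$ is equivalent to $(\mathbf{P}_2)$ by Theorem~\ref{theorem1} and the relaxation $(\mathbf{RP})$ is a valid lower-bounding problem, every unpruned subproblem's IPM optimum is a lower bound on the restricted integer optimum, so $\beta_L^{(\theta)} \leq F^\star \leq \beta_U^{(\theta)}$ where $F^\star$ denotes the optimum of $(\mathbf{P}_2)$.

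Finally I would close the argument on the gap. Both bounds are monotone and confined to $[0, \beta_U^{(0)}]$, and after at most $\Theta_{\max}$ outer iterations the active set is exhausted: at that point every leaf has either been pruned by bound ($\Psi^c > \beta_U$), by integrality ($\xi < \tau$ producing a feasible incumbent), or by infeasibility, so $\beta_L^{(\theta)}$ is forced up to the best incumbent value, which equals $\beta_U^{(\theta)}$. Hence $\beta_U^{(\theta)} - \beta_L^{(\theta)} \to 0$ in finitely many steps, and in particular for the prescribed $\eta > 0$ the stopping test $\beta_U - \beta_L \leq \eta$ triggers at some earliest step $\theta_\eta \leq \Theta_{\max}$, establishing Eq.~(\ref{eqn:prop_sol12}). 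Combined with $\beta_L^{(\theta_\eta)} \leq F^\star \leq \beta_U^{(\theta_\eta)}$, this yields $\beta_U^{(\theta_\eta)} - F^\star \leq \eta$, i.e., $\beta_U^{(\theta_\eta)}$ is $\eta$-optimal. The main obstacle I expect is making the lower-bound step rigorous: specifically, justifying that the IPM-computed relaxation value is a true lower bound on the integer-restricted objective of each subproblem (which requires either arguing the IPM is solved to sufficient accuracy at each node or absorbing its numerical error into $\eta$), and ensuring that the integrality gap threshold $\tau$ in the pruning test does not introduce a persistent gap that prevents $\beta_L$ from catching up to $\beta_U$.
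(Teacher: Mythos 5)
Your proof is correct in substance, but it takes a genuinely different route from the paper's. The paper adapts the convergence argument of \cite{Fowkes:2013} for branch-and-bound over continuous search regions: it posits a threshold $\phi>0$ such that any subproblem with $F_{\mathbf{SP}}(\mathbf{x},\mathbf{y},\mathbf{z})\leq\phi$ already has bound gap at most $\eta$, argues that some search region $\psi_{\hat{t}}$ eventually satisfies this, and then transfers the gap from that region to the global bounds via $\beta_L(\psi_{\hat{t}})=\beta_L^{(\theta_{\hat{t}})}$ and $\beta_U^{(\theta_{\hat{t}})}\leq\beta_U(\psi_{\hat{t}})$, concluding $\beta_U^{(\theta_{\hat{t}})}-\hat{\Psi}\leq\eta$. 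You instead exploit the discreteness of the branching set: since every branching fixes a binary variable, the tree has finitely many leaves, each node is resolved by the IPM in finitely many inner steps, the bound sequences are monotone and confined to a bounded interval, and exhaustion of the active set forces $\beta_U-\beta_L$ to zero, so the test $\beta_U-\beta_L\leq\eta$ fires at some finite $\theta_\eta$. For this particular problem your argument is arguably the more natural one — the branching variables are binary, so no shrinkage condition on continuous regions is required, whereas the paper's argument (designed for spatial branch-and-bound) leaves the existence of $\phi$ essentially asserted in a discrete setting. What the paper's route buys is that the conclusion is phrased directly in terms of the region where the incumbent is found rather than worst-case tree exhaustion, which can certify $\eta$-optimality long before the tree is enumerated. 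The caveats you flag are real but minor: the count $2J_{\mathbf{x}}+J_{\mathbf{y}}$ double-counts $x_n^i$ and $x_m^i$ (they are the same $I\times N$ caching variables indexed from different nodes), which only loosens an already finite bound; and the need to absorb IPM inexactness and the integrality threshold $\tau$ into $\eta$ for the lower bounds to remain valid is a genuine issue that the paper's proof does not address either.
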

\begin{proof}
See Appendix~\ref{appx:theorem4}.
\end{proof}

%==========================================================================================
%==========================================================================================
\section{Distributed Cooperative Caching{-Delivering} Solution}
\label{sec:DS}

Despite the fact that the aforementioned proposed solution (i.e., the centralized solution) can find the final solution within 1\% to minimize the total average delay for the whole network, it faces two issues. First, the centralized solution requires a centralized computer with the whole network topology and information to be able to solve the nested dual optimization problem. Second, it incurs significant communication overheads among MENs, especially for a large number of MENs. In this section, we introduce a distributed suboptimal solution, which can address these issues. Specifically, under the distributed solution, each MEN first finds the locally optimal caching policy based on the local users' demands and its storage capability. Then, each MEN will communicate with its directly connected nodes to find cooperations through discovering {the duplicate contents}. 

To obtain locally optimal caching policy under the distributed solution, each MEN-$n$ needs to consider three cases when its local user $u$ requests content $i$:
\begin{itemize}
	\item \emph{Case 1}: If MEN-$n$ has the requested content, the delay to deliver the content will be $x^i_n d_\alpha$, where $d_\alpha = \frac{c_i}{b_n^u}$.
	\item \emph{Case 2}: If MEN-$n$ does not have the requested content but the BS (MEN-$N$) has the content, the MEN-$n$ will download the content from the BS, and thus the delay will be $(1 - x^i_n)x^i_Nd_{\beta}^\ddagger$, where $d_{\beta}^\ddagger \defeq \Big(d_\alpha + \frac{c_i}{l^N_n}\Big)$.
	\item \emph{Case 3}: If MEN-$n$ and the BS do not have the requested content, the content will downloaded from the CS via the BS, and thus the delay will be $(1 - x^i_n)(1 - x^i_N)d_{\delta}$, where $d_{\delta} = \Big(d_\alpha + \frac{c_i}{l^N_n} + \frac{c_i}{B}\Big)$.
\end{itemize}
Note that \emph{Case 2} is ignored when users connected directly to the BS. Moreover, since the bandwidth between the BS and CS is usually much higher than that among MENs \cite{Poularakis2:2016}, for \emph{Case 3}, the delay to download the content when users are connected to the BS directly becomes $(1 - x^i_N)d_{\delta}^\ddagger$, where $d_{\delta}^\ddagger \defeq \Big(d_\alpha + \frac{c_i}{B}\Big)$. Based on the aforementioned analysis, we then formulate the locally optimal caching optimization problem, i.e., optimal caching at each node without horizontal cooperation, for MEN-$n$ (where $n \in [1, N-1]$) as ($\mathbf{P}_4$) with the following equation:
\begin{equation}
\label{eqn:prop_sol21a}
\begin{aligned}
(\mathbf{P}_4) \phantom{5} \min_{\mathbf{x}} F_{n}(\mathbf{x}),
\end{aligned}
\end{equation}
\begin{eqnarray}
\text{s.t }\sum_{i=1}^Ix_n^ic_i \leq s_n, n \in [1, N-1],  \label{eqn:prop_sol21b} \\
x_n^i, x_N^i \in \{0, 1\},  n \in [1, N-1], \forall i \in \mathcal{I}, \label{eqn:prop_sol21c}
\end{eqnarray}
and for the BS as ($\mathbf{P}_5$) with the following expression:
\begin{equation}
\label{eqn:prop_sol21d}
\begin{aligned}
(\mathbf{P}_5) \phantom{5} & \min_{\mathbf{x}} F_{N}(\mathbf{x}),
\end{aligned}
\end{equation}
\begin{eqnarray}
\text{s.t }\sum_{i=1}^Ix_N^ic_i \leq s_N, \label{eqn:prop_sol21e} \\
x_N^i \in \{0, 1\}, \forall i \in \mathcal{I}, \label{eqn:prop_sol21f}
\end{eqnarray}
{where}
\begin{equation}
\label{eqn:prop_sol21a0}
\begin{aligned}
F_{n}(\mathbf{x}) &= \sum_{i=1}^I f_n^i \Bigg[ \sum_{u=1}^{U_n} \bigg( x^i_n d_\alpha + \\
&(1 - x^i_n)x^i_Nd_{\beta}^\ddagger + (1 - x^i_n)(1 - x^i_N)d_{\delta} \bigg)\Bigg] \\
&= \sum_{i=1}^I f_n^i \Bigg[ \sum_{u=1}^{U_n} \bigg( x^i_n d_\alpha + \\
&(1 - x^i_n)\Big(x^i_Nd_{\beta}^\ddagger + (1 - x^i_N)d_{\delta}\Big) \bigg)\Bigg], \\
\end{aligned}
\end{equation}
{and}
\begin{equation}
\label{eqn:prop_sol21d0}
\begin{aligned}
F_{N}(\mathbf{x}) =\sum_{i=1}^I f_N^i \Bigg[ \sum_{u=1}^{U_N} \bigg( x^i_N d_\alpha + (1 - x^i_N)d_{\delta}^\ddagger \bigg)\Bigg].
\end{aligned}
\end{equation}
It is shown that the problems ($\mathbf{P}_4$) and ($\mathbf{P}_5$) are standard binary linear programming {which are generally NP-complete~\cite{Karp:1972}. Nonetheless, the optimization occurs at each MEN only with a much lower number of variables (than the centralized problem). In this way, those problems then can be solved effectively using popular solvers}. 

After obtaining the locally optimal caching decisions at each MEN, the cooperations among MENs to find {the duplicate contents} is then carried out. In particular, each MEN-$n$ will consider the {duplication} of its current cached contents with current ones of its directly connected MEN-$m$, where $m \in \mathcal{M}_n^i$, at a particular time. The flowchart in Fig.~\ref{fig:Flowchart_DS} shows how the cooperations among MENs work in this solution. The key idea of this cooperation process is that we need to find the minimum total average delay $\Psi$ in the network by minimizing the duplicate contents among MENs. The process terminates when all candidate contents have been checked at each MEN-$n$. The complete pseudocode of the distributed solution is shown in Algorithm \ref{MINLP-DSA}. 

\begin{figure}[!t]
	\centering
	\includegraphics[scale=0.53]{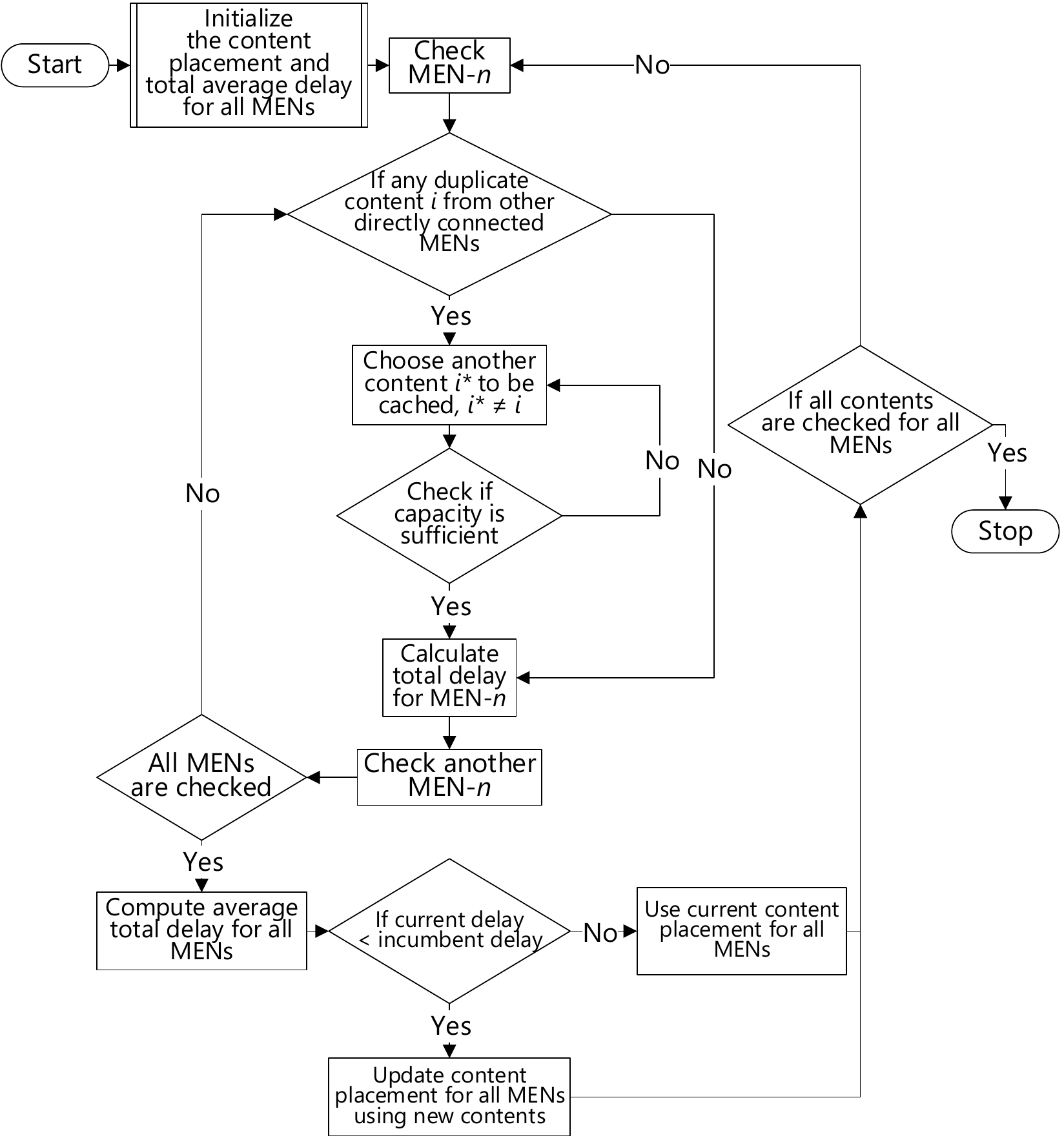}
	\caption{Flowchart of cooperations scheme among MENs for the distributed solution.}
	\label{fig:Flowchart_DS}
\end{figure}

\begin{algorithm}[]
\caption{Distributed Solution} \label{MINLP-DSA}

\begin{algorithmic}[1]

\STATE Initialize $x_n^i \in {0,1}, \forall n \in \mathcal{N} \text{, }\forall i \in \mathcal{I}$ from locally optimal policy problems ($\mathbf{P}_4$) and ($\mathbf{P}_5$)

\STATE Check $s^*_n, \forall n \in \mathcal{N}$ /* Current capacity of each MEN */

\STATE Calculate $\Psi$ /* total average delay of $F_n(\mathbf{x})$ in problem ($\mathbf{P}_4$) and $F_N(\mathbf{x})$ in problem ($\mathbf{P}_5$) */

%\WHILE{$\mathcal{R}_a \neq \varnothing$}

\FOR{$\forall i \in \mathcal{I}$}

\FOR{$\forall n \in \mathcal{N}$}

\IF{$x_n^i = x_m^i = 1, \forall m \in \mathcal{M}_n^i \subset \mathcal{N}, m \neq n$}

\STATE Set $s^*_n \leftarrow s^*_n - c_i$ %/* Discard the copy of the same content from storage capacity */ 

\STATE Select $i^*, i^* \neq i,  \forall i^* \in \mathcal{I}$ %/* Choose new candidate content */  

\IF{$s^*_n + c_{i^*} \leq s_n$}

\STATE Set $x_n^{i^*} \leftarrow 1$ /* Store new candidate content */ 

\STATE Obtain information from MEN-$m, \forall m \in \mathcal{M}_n^i$

\STATE Calculate $\Psi^c_n$ /* Current total average delay */

\ENDIF

\ENDIF

\ENDFOR

\STATE Compute the average of all $\Psi^c_n$'s, $\forall n \in \mathcal{N}$ into $\Psi^c$ 

\IF{$\Psi^c < \Psi$}

\STATE  Update $x_n^i, \forall n \in \mathcal{N} \text{, }\forall i \in \mathcal{I}$

\STATE  $\Psi \leftarrow \Psi^c$

\ENDIF

\ENDFOR

\STATE Store final ${\hat x}_n^i, \forall n \in \mathcal{N} \text{, }\forall i \in \mathcal{I}$

\STATE Calculate final ${\hat \Psi}$ using the solution in Line 22

%\ENDWHILE

\end{algorithmic}
\end{algorithm}

The distributed solution algorithm has polynomial complexity. This is shown by the double looping when each content $i$ is checked at each MEN-$n$. In addition, if an MEN-$n$ has duplicate contents with at least one directly connected MEN-$m$, the MEN-$n$ will check the rest of the contents (i.e., $I^* = I - 1$). Obviously, the algorithm has polynomial time complexity $O(I \times N \times I^*)$.% DICOC may not be able to obtain the optimal caching policy as the CECOC, it provides an alternative solution for the MEC service providers in the case if they want to reduce the complexity in finding the optimal policy, especially when the number of MENs is very large.
\section{Illustrative Case Study}
\label{sec:IE}

\begin{figure}[!]
	\centering
	\includegraphics[scale=0.35]{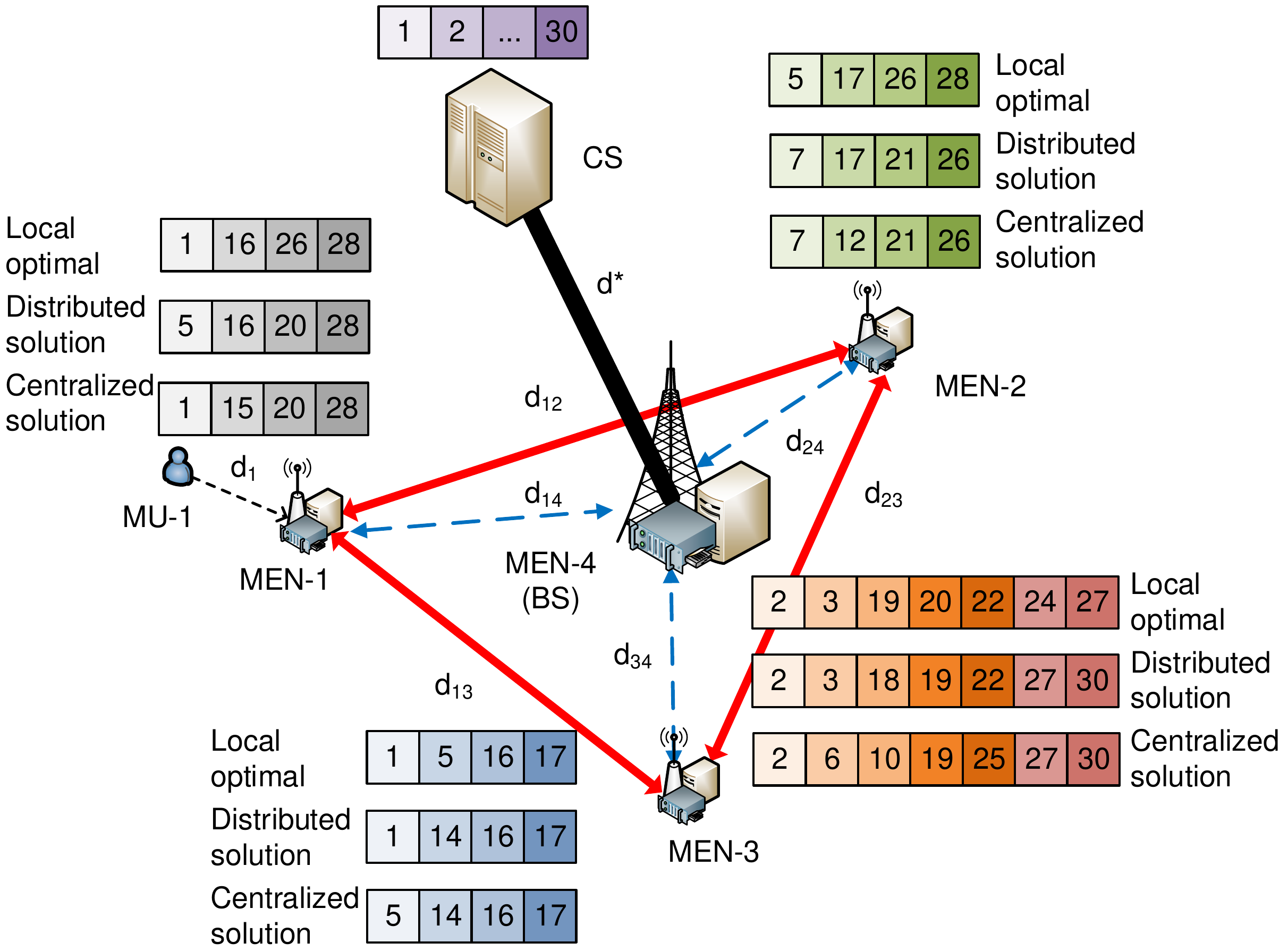}
	\caption{An illustrative example of caching policies obtained by locally optimal policy, distributed solution, and centralized solution.}
	\label{fig:illus_example}
\end{figure}

To show the efficiency of the proposed solutions using direct horizontal cooperations among MENs, we present an illustrative example in Fig.~\ref{fig:illus_example}. We use 4 nodes (i.e., 3 MENs and the BS) and 30 available contents with various frequency-of-accesses. The content size is uniformly distributed between 50MB and 200MB. We set the storage capacity at 600MB for MEN-1 to MEN-3 and 1GB for the BS. Suppose that the bandwidth between MEN-$n$ and MEN-$m$ ($\forall n,m \in \mathcal{N},$ and $n \neq m$) is equal. Based on the aforementioned parameter settings, each MEN and the BS can cache up to 4-7 contents. 

As can be seen in Fig.~\ref{fig:illus_example}, we show caching policies obtained by the locally optimal policy and the proposed solutions (i.e., distributed and centralized solutions). Thanks to horizontal cooperations among MENs in distributed and centralized solutions, MENs can share cached contents to minimize the total average delay for the whole network instead of minimizing the delay for each individual MEN (as in the locally optimal policy). For example, under the centralized solution, there is no duplicate content cached at all MENs (i.e, each MEN has different set of contents), and thus the number of contents cached at MENs in the network can be maximized (i.e., 19 of 30 available contents are cached in the network). In this way, the centralized solution can find the  near-optimal caching and delivering policy based on the full network topology and information of all nodes in the whole network. 

For the distributed solution, few duplicate contents are cached at MENs. Specifically, content 16 is cached at MEN-1 and MEN-3, while content 17 is stored at MEN-2 and MEN-3. The reason is that the distributed solution cannot achieve optimal solution even though the {duplicate} contents at all MENs {are} minimized. However, compared with locally optimal policy, the distributed solution has much better disparate content distribution due to the horizontal cooperation (i.e., 17 of 30 available contents are cached in the network). When we use the locally optimal policy, more duplicate contents are produced (i.e., only 13 of 30 available contents are cached in the network) due to the locally caching optimization at each MEN (without horizontal cooperation among MENs).

%==========================================================================================
%==========================================================================================
\section{Simulation Results}
\label{sec:NR}

We perform simulations to evaluate the performance of the proposed solutions, i.e., centralized and distributed solutions, with those of other caching policies including greedy policy \cite{Borst2010Distributed,Zhang:2018,Wang:2016}, most FoA policy \cite{Yang:2016,Cui:2017,Hassan:2016,Chen:2017}, locally optimal policy \cite{Hoang:2018,Liu:2017}, and guaranteed greedy policy \cite{Dehghan2015On,Golrezaei:2012}. For the greedy policy, contents are cached as many as possible. For the most FoA policy, contents with high FoA are prioritized to be cached. For the locally optimal policy, the MENs minimize their own delays without considering the cooperation. For the guaranteed greedy policy, contents which maximize caching gain are greedily added into caching storage under guaranteed ($1 - 1/\emph{e}$) factor of the optimal solution. In all simulations, the content size is randomly generated using a uniform distribution between 100 and 300MB, while the FoAs of all MENs follow a Zipf distribution based on the ranks of the contents~\cite{Poularakis:2016,Ao:2015,Poularakis:2014} with shape parameter set at 0.1. The bandwidth between an MU and its associated MEN, and between two directly connected MENs are set at 10 and 45Mbps, respectively. Furthermore, bandwidth between an MEN and the BS is 10Mbps and between the BS and the CS is 60Mbps. Note that the bandwidth between two MENs is usually higher than that between an MEN and the BS because in practice two directly connected MENs are often placed in the same area or close to each other where wired or fast wireless connections can be used~\cite{Poularakis:2016,Ao:2015,Poularakis:2014}. 

%++++++++++++++++++++++++++++++++++++
%++++++++++++++++++++++++++++++++++++
\subsection{Evaluation of Proposed vs. Optimal Solutions}

\begin{figure}[!]
	\centering
	\includegraphics[scale=0.5]{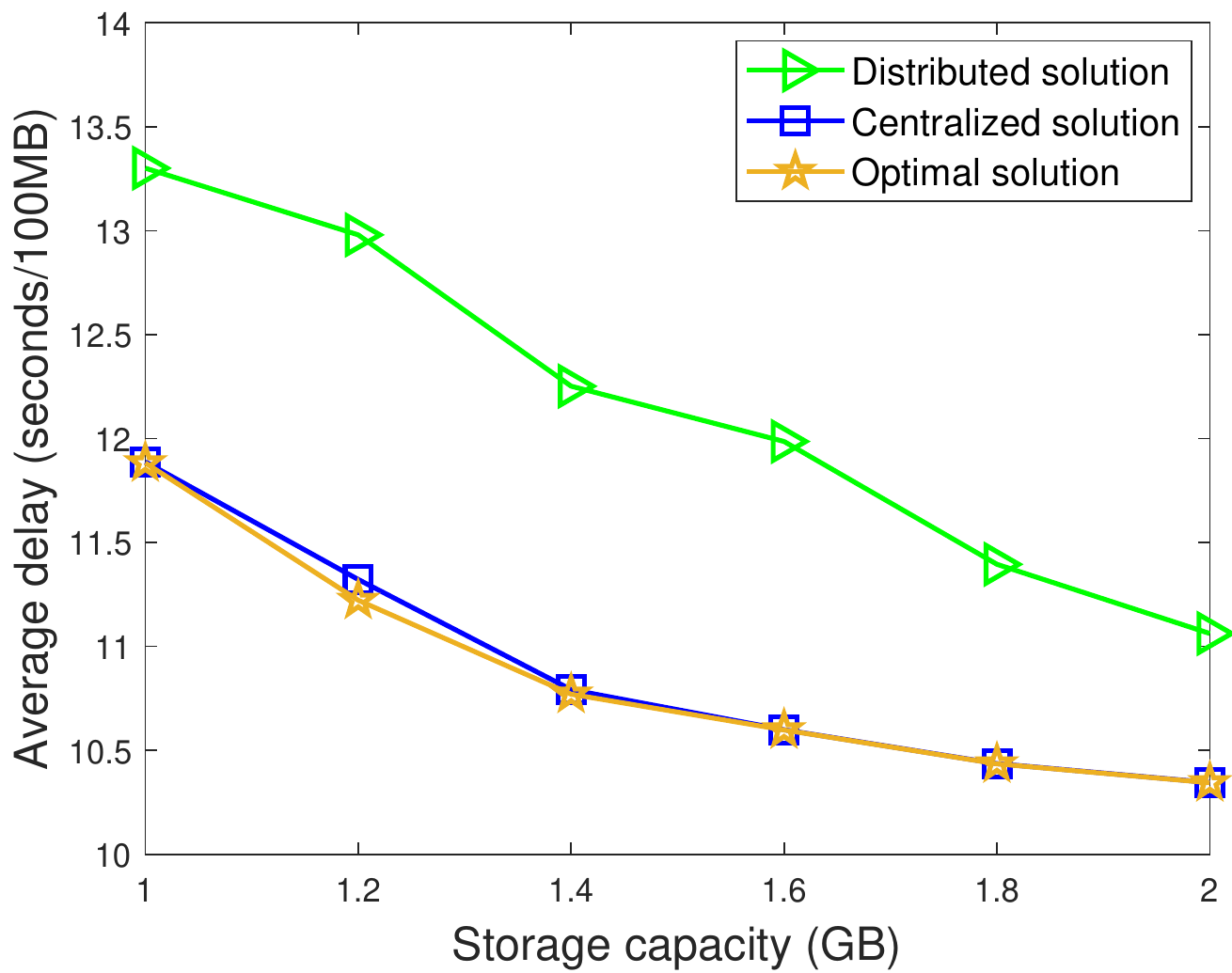}
	\caption{Evaluation of proposed solutions against optimal solution.}
	\label{fig:Avg_delay_opt}
\end{figure}

Fig.~\ref{fig:Avg_delay_opt} demonstrates the total average delay performance between our proposed solutions, i.e., centralized and distributed solutions, and optimal solution when the storage capacity increases from 1 to 2GB and 15 contents are available for 2 MENs. Here, we only consider small-size problems due to the exponential complexity of finding the optimal solution using the exhaustive searching scheme. It can be seen that although the proposed centralized solution cannot guarantee the optimal solution for the original problem, its performance can achieve very close (less than 1\% gap) to that of the optimal solution obtained by the exhaustive searching scheme. Furthermore, the proposed distributed solution still can provide reasonable total average delay gap within 13\% to the optimal solution and the gap becomes smaller as the storage capacity increases.

%====================================================
%====================================================
\subsection{Total Average Delay}  
\label{subsec:total_delay}

In this section, we evaluate the total average delay under the influence of various storage capacities, number of contents, and number of MENs in the network. In particular, we use 0 to 10GB caching capacity, 50 to 200 number of available contents, and 2 to 10 number of MENs. Furthermore, the content size is uniformly distributed between 100MB and 300MB. 

%++++++++++++++++++++++++++++++++++++
%++++++++++++++++++++++++++++++++++++
\subsubsection{Effects of Storage Capacity}

\begin{figure}[!]
	\centering
	\includegraphics[scale=0.5]{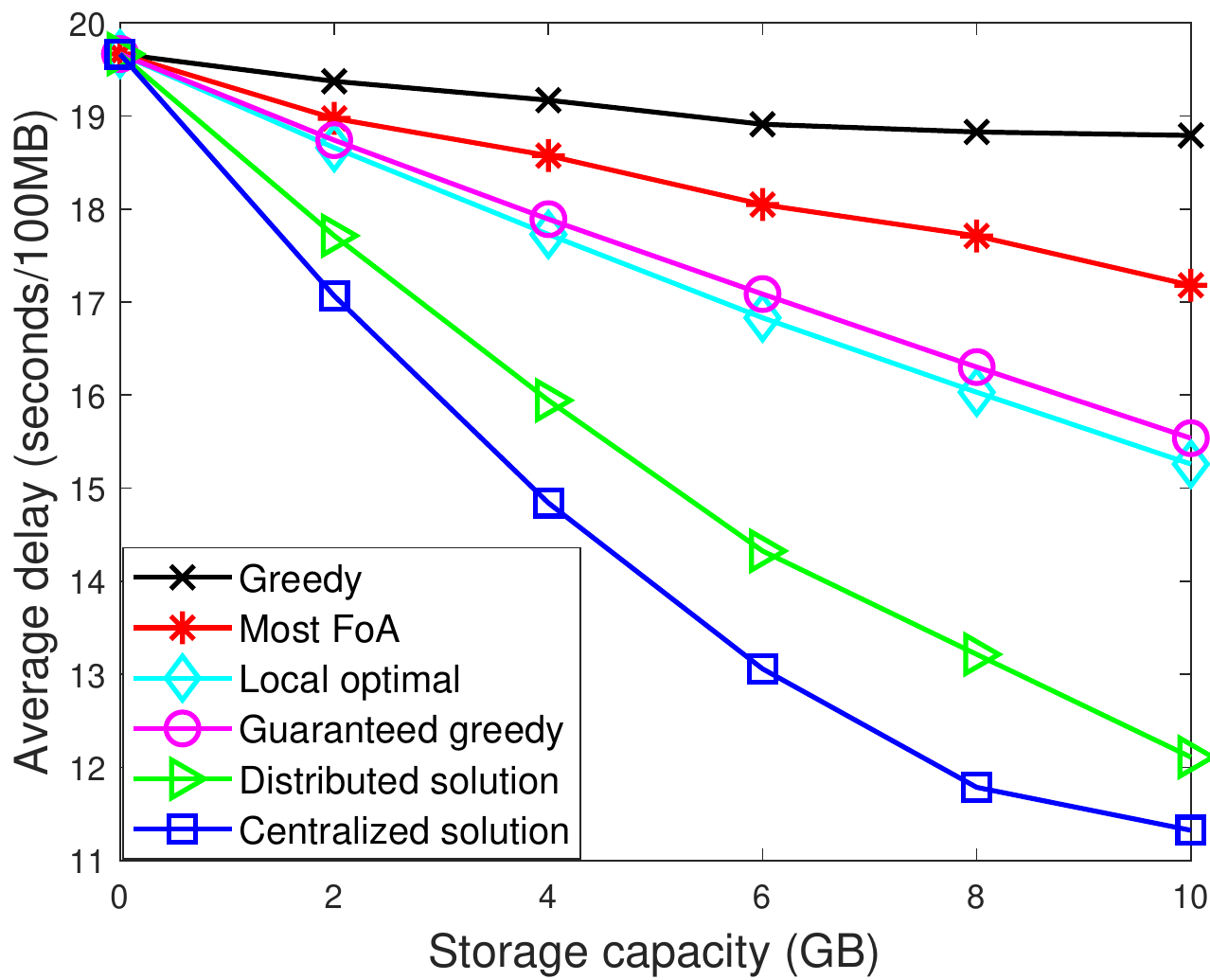}
	\caption{Average delay vs storage capacity.}
	\label{fig:Avg_delay_inc_cap}
\end{figure}

\begin{figure*}[!]
	\begin{center}
		$\begin{array}{ccc} 
		\epsfxsize=2.35 in \epsffile{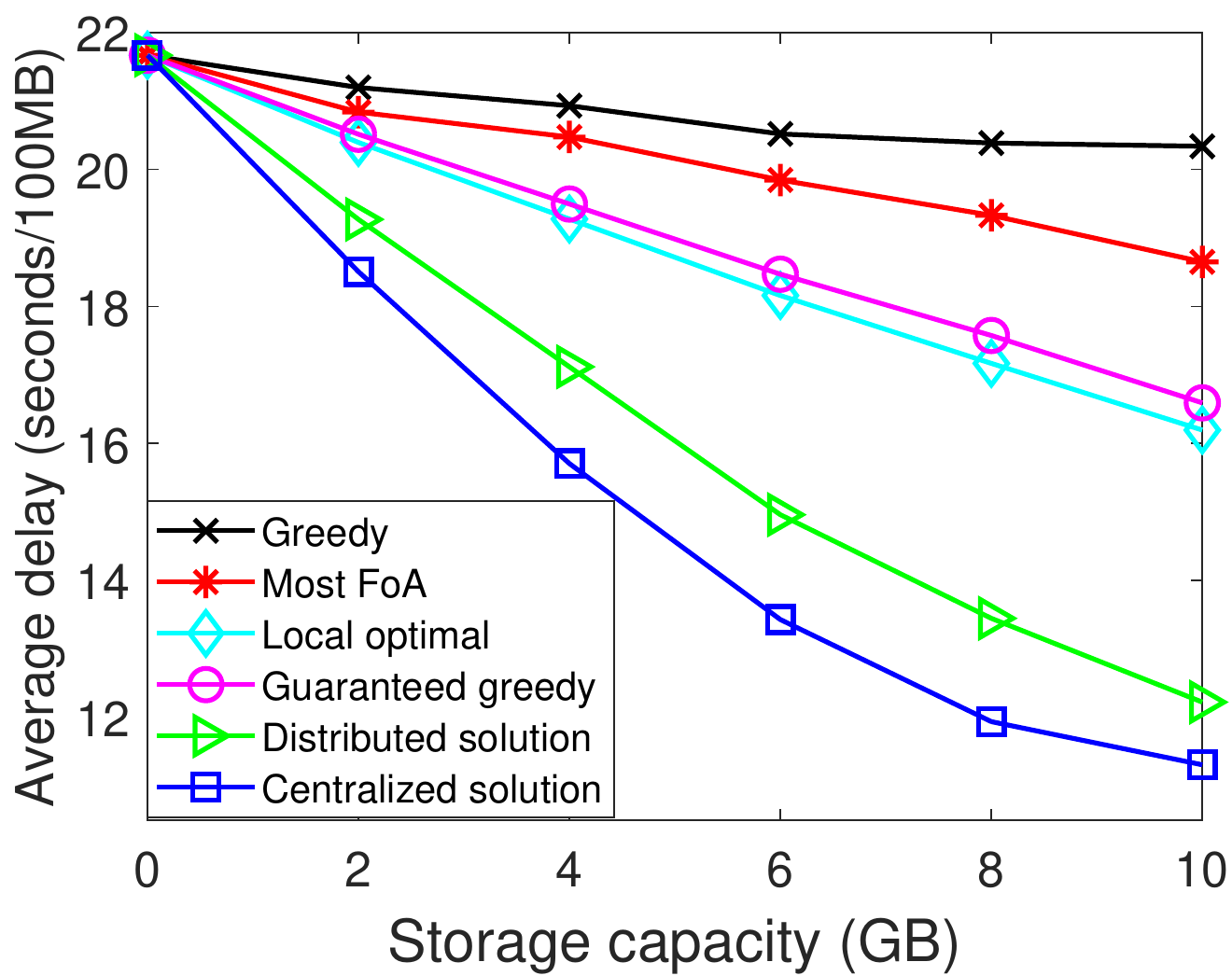} &
		\hspace*{-.37cm}
		\epsfxsize=2.35 in \epsffile{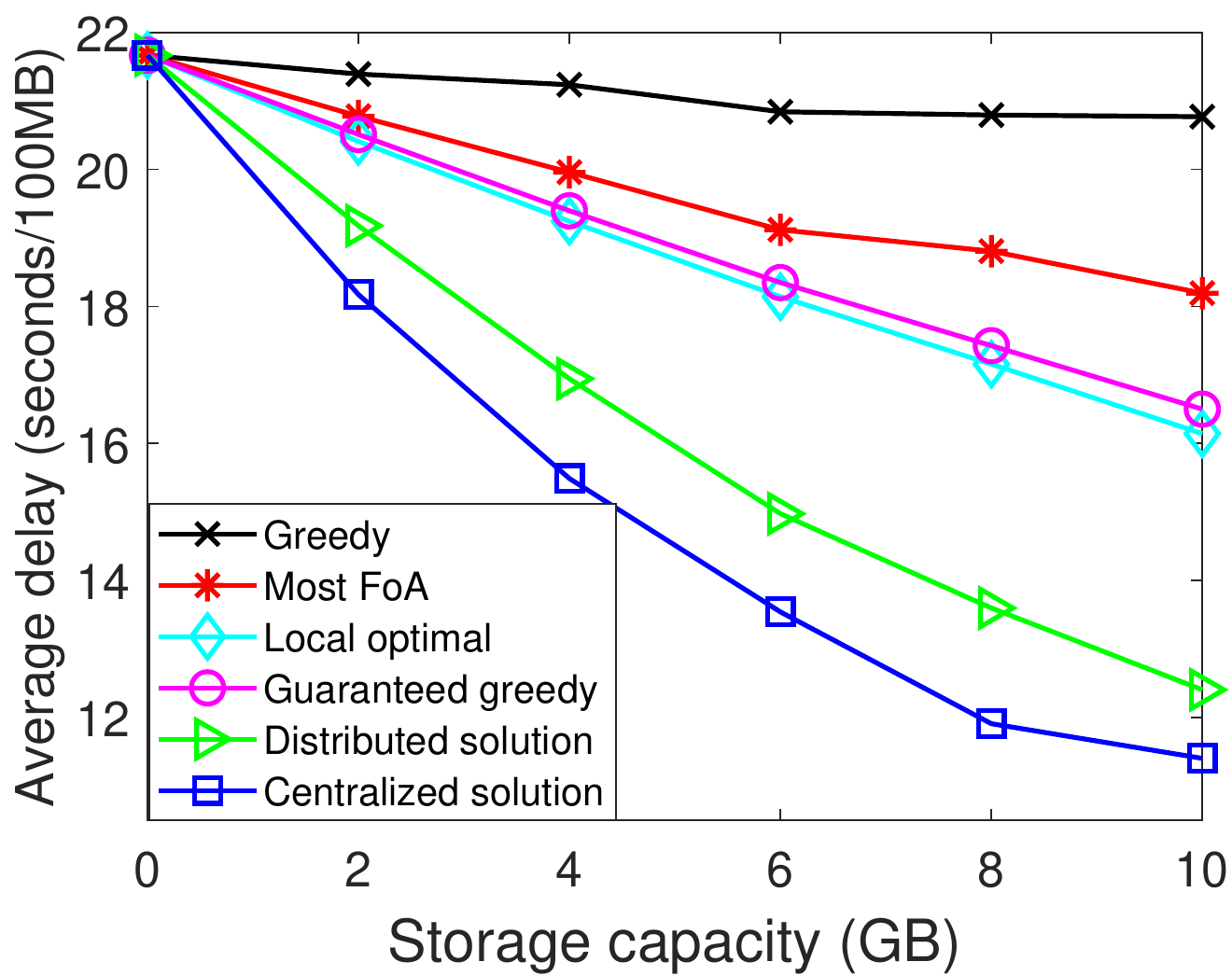} & 
		\hspace*{-.37cm}
		\epsfxsize=2.35 in \epsffile{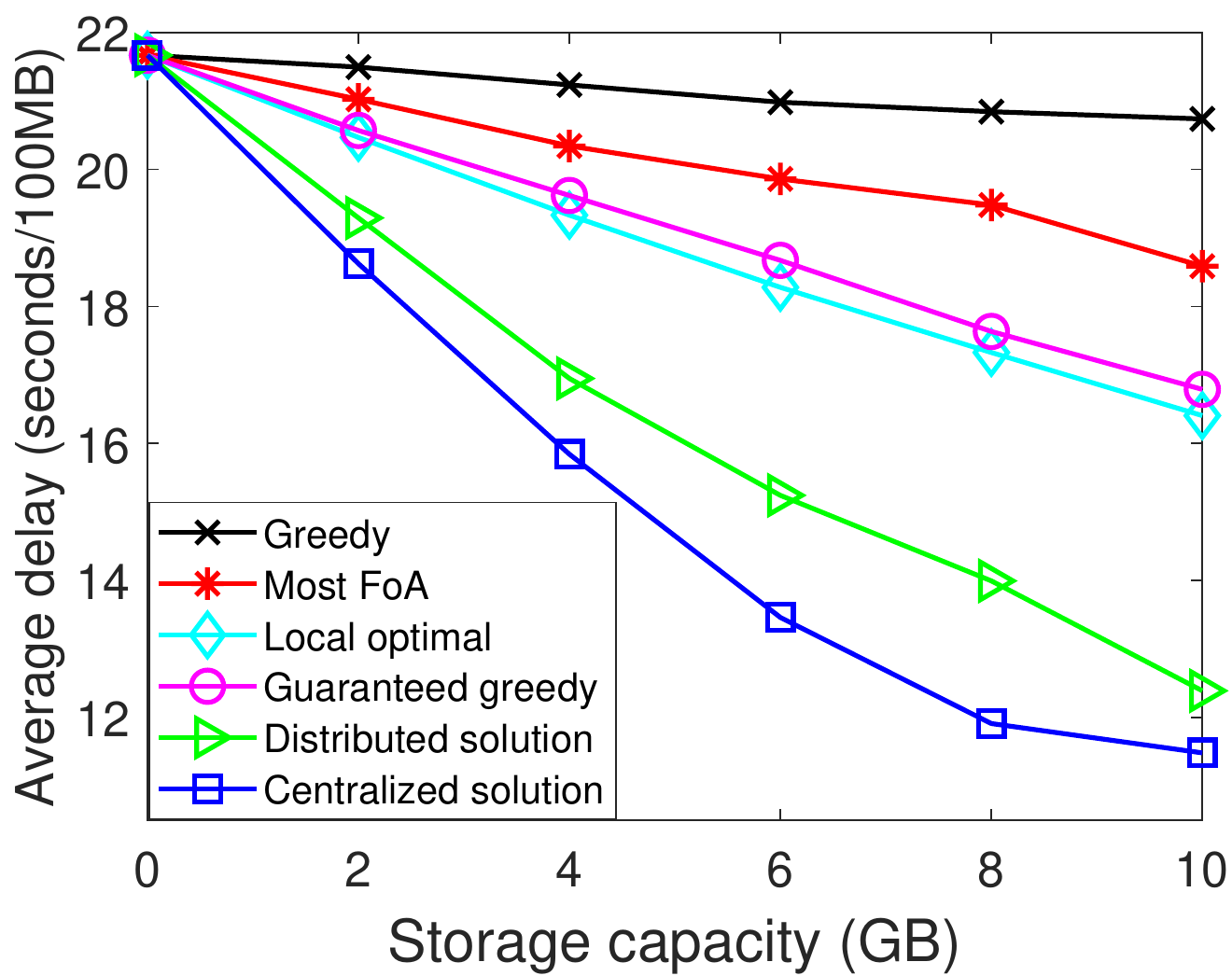} \\ [0.1cm]
		\text{\footnotesize (a) MEN-1} & \text{\footnotesize (b) MEN-2} & \text{\footnotesize (c) MEN-3} \\ [0.2cm]
		\end{array}$
		$\begin{array}{cc} 
		\epsfxsize=2.35 in \epsffile{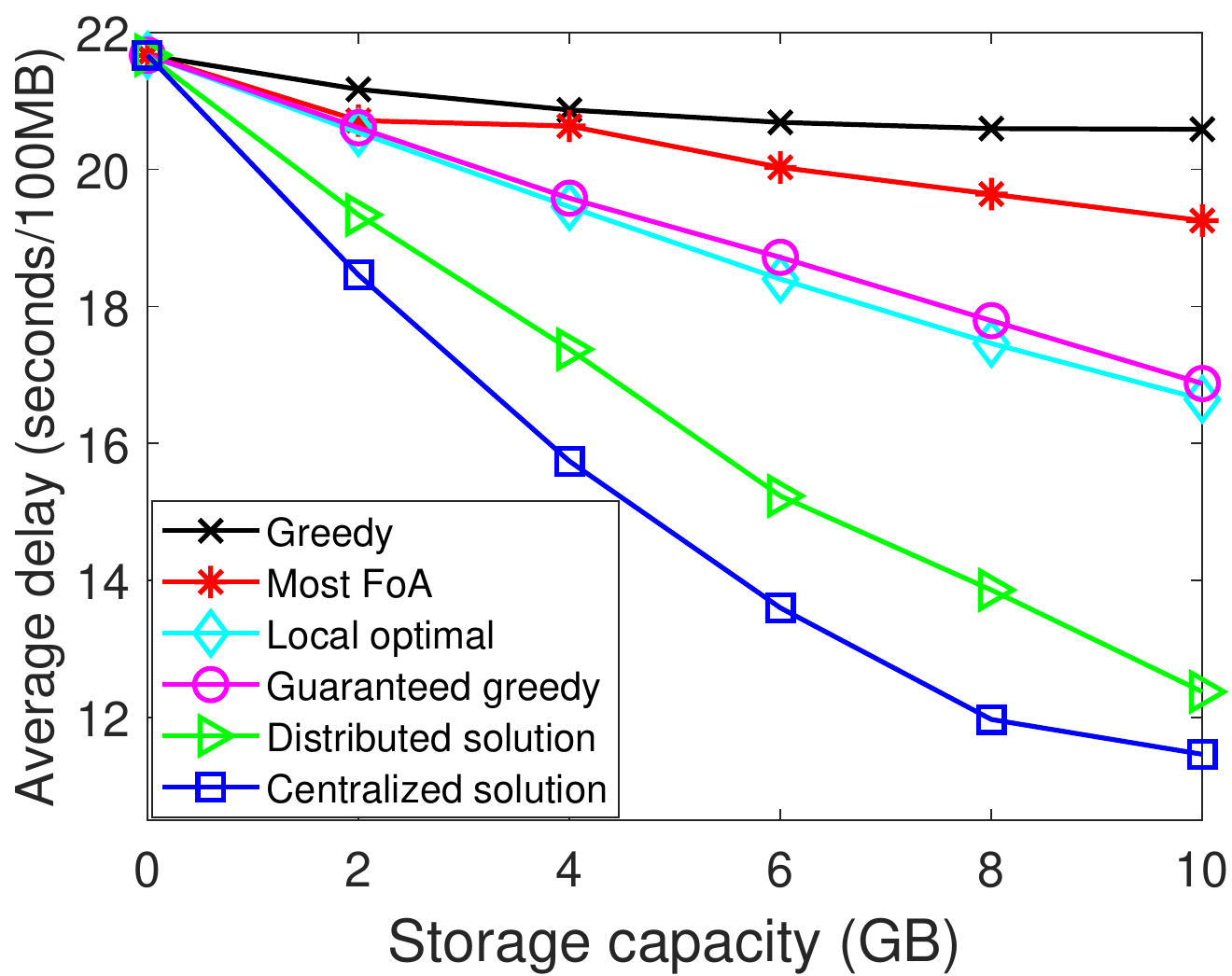} &
		\hspace*{-.37cm}
		\epsfxsize=2.4 in \epsffile{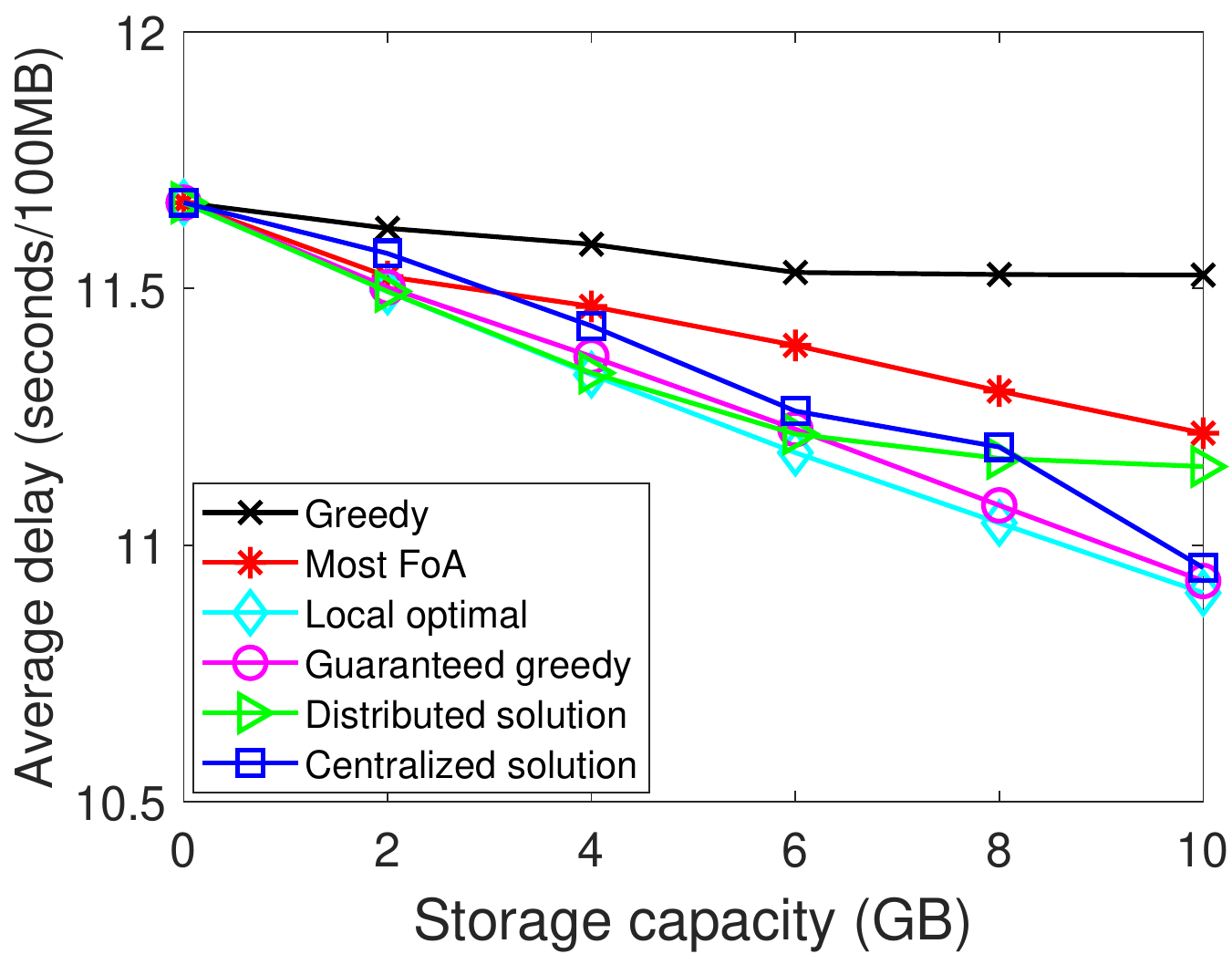} \\ [0.1cm]
		\text{\footnotesize (d) MEN-4} & \text{\footnotesize (e) MEN-5 (BS)} \\ [0cm]
		\end{array}$
		\caption{Average delay vs storage capacity for the MENs and the BS.}
		\label{fig:Avg_delay_inc_cap_pernode}
	\end{center}
\end{figure*}

%\begin{figure}[!]
%	\begin{center}
%		$\begin{array}{ccc} 
%		\epsfxsize=1.65 in \epsffile{} &
%		\epsfxsize=1.65 in \epsffile{} \\ [0.1cm]
%		\text{\footnotesize (a) MEN-1} & \text{\footnotesize (b) MEN-2} \\
%	     \end{array}$
%          $\begin{array}{cc} 
%           \epsfxsize=1.65 in \epsffile{} &
%		\epsfxsize=1.65 in \epsffile{} \\ [0.1cm]
%		\text{\footnotesize (c) MEN-3} & \text{\footnotesize (d) MEN-4} \\
%          \end{array}$
%          $\begin{array}{cc} 
%		\epsfxsize=1.65 in \epsffile{} \\ [0.1cm]
%		\text{\footnotesize (e) MEN-5 (BS)} \\ [0cm]
%          \end{array}$
%		\caption{Average delay vs storage capacity for the MENs and the BS.}
%		\label{fig:Avg_delay_inc_cap_pernode}
%	\end{center}
%\end{figure}

Fig.~\ref{fig:Avg_delay_inc_cap} shows the trend of the total average delay in the MEC network with 200 considered available contents as the storage capacity increases from 0 to 10GB. It is observed that as the storage capacity increases, the average delays obtained by all policies are reduced, and the centralized solution outperforms all other policies. Specifically, when the storage capacity is 10GB, the average delay obtained by the centralized solution is 35\% lower than the locally optimal policy, 37\% lower than the guaranteed greedy policy, 52\% lower than the most FoA policy, and 66\% lower than the greedy policy. This is because the MENs can collaborate together to improve caching efficiency for the whole network when the centralized solution is used. In constrast, conventional policies (i.e., greedy, most FoA, guaranteed greedy, and locally optimal policies) do not consider cooperative caching among the MENs, and thus requested contents can only be downloaded from the CS (through the BS) if the contents are not stored at the associated MEN of the requesting users. For the distributed solution, although it cannot achieve the optimal solution, its performance is always better than the conventional policies, and the gap with the centralized solution is only within 12\%.  

We then observe the average delay from each participating MEN and the BS in Fig.~\ref{fig:Avg_delay_inc_cap_pernode}. In general, each MEN in Fig.~\ref{fig:Avg_delay_inc_cap_pernode}(a)-(d) follows the same trend as Fig.~\ref{fig:Avg_delay_inc_cap}. However, an interesting result can be seen in Fig.~\ref{fig:Avg_delay_inc_cap_pernode}(e). In our framework, we leverage the direct horizontal collaborations among MENs to minimize the average delay for the whole network. As such, since all nodes are also connected to the BS directly, the BS must sacrifice its performance to reduce the overall network delay by allowing other nodes to cache more popular contents. Nonetheless, even the BS sacrifices, its performance is still greater than those of the greedy and most FoA policies, and close to those of the locally optimal and guaranteed greedy policies.

%++++++++++++++++++++++++++++++++++++
%++++++++++++++++++++++++++++++++++++
\subsubsection{Effects of Number of Contents}

Fig.~\ref{fig:Avg_delay_inc_con} presents the total average delay as the number of contents increases from 50 to 200 contents. We fix the storage capacity for all MENs (including the BS) at 5GB. As expected, for a given storage capacity, if the number of contents increases, the average delays obtained by all policies increase. The reason is that the mobile users cannot download more contents from their directly connected MENs. Instead, the MENs need to download the contents from other nodes more frequently. Nevertheless, due to the collaboration, the average delay obtained by the centralized solution is still much lower by 15\% up to 55\% than those of all other conventional policies. Furthermore, the distributed solution can achieve the performance very close to that of the centralized solution especially when the number of contents is low. 

\begin{figure}[!]
	\centering
	\includegraphics[scale=0.5]{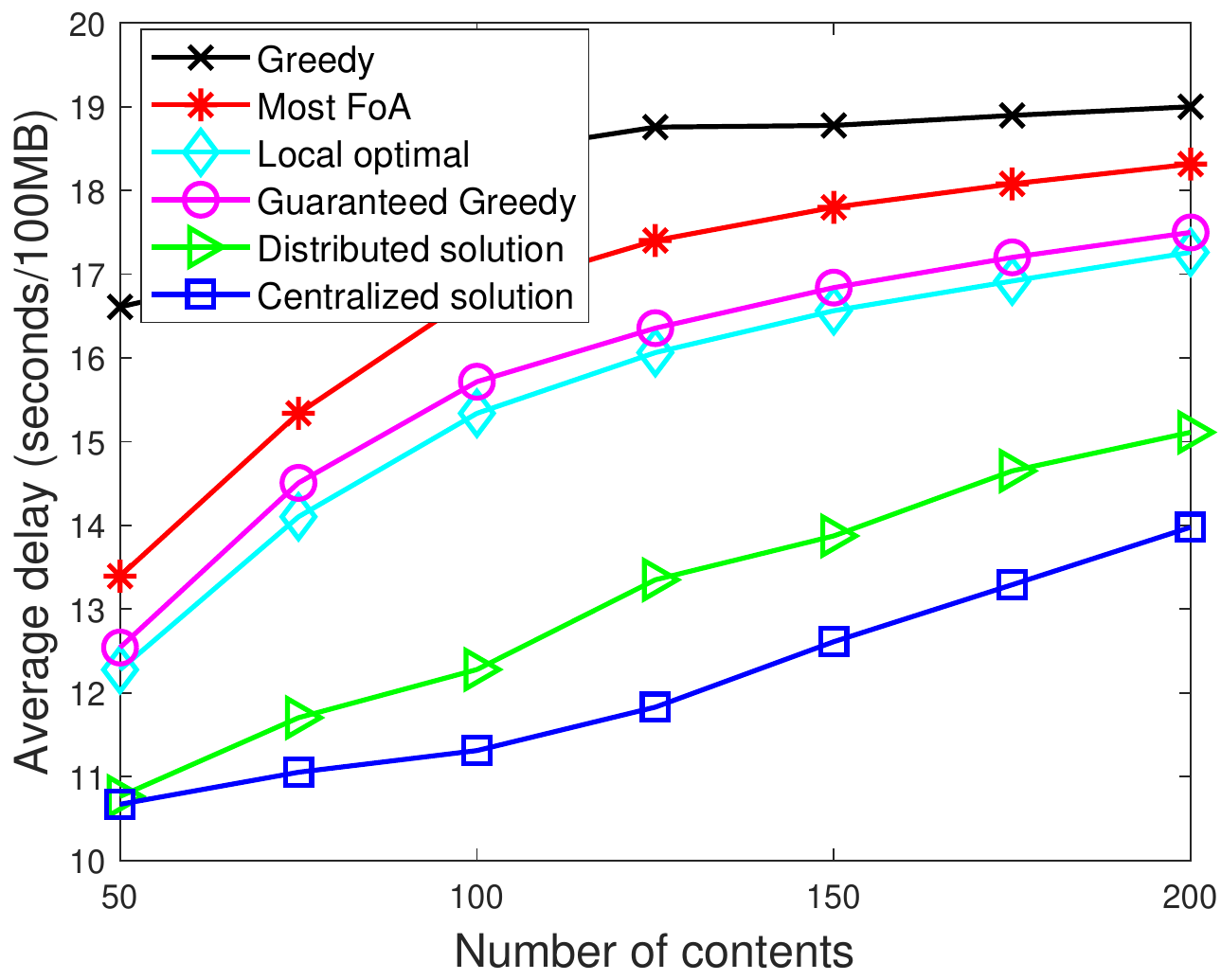}
	\caption{Average delay vs number of contents.}
	\label{fig:Avg_delay_inc_con}
\end{figure}

%++++++++++++++++++++++++++++++++++++
%++++++++++++++++++++++++++++++++++++
\subsubsection{Effect of Number of MENs}

Fig.~\ref{fig:Avg_delay_inc_en} shows the total average delay when we increase the number of MENs from 2 to 10. We fix the storage capacity of all MENs and the number of contents at 10GB and 200 contents, respectively. Interestingly, as the number of MENs increases, the total average delays obtained by the greedy, most FoA, guaranteed greedy, and locally optimal policies increase gradually. The reason is that these methods do not consider the collaboration among the MENs, and thus when the number of MENs increases, the delay at each MEN will contribute to the total delay of the network, yielding to an increasing trend. 

Nevertheless, the total average delay obtained by the distributed solution first dramatically decreases when the number of MENs increases from 2 to 4 (the same trend applies for the centralized solution). Then, the total average delay of the distributed solution slightly rises. However, it is still much lower by as much as 29\% compared with other conventional policies as the number of MENs increases. The reason of increasing trend in the total delay is that each MEN does not have full knowledge of the content caching decisions in the whole network. Thus, the MEN may download the content from the CS (through the BS) eventhough not directly connected MENs may cache the requested content. In contrast, due to the centralized control, the centralized solution can slightly reduce more delay and remain stable when the number of MENs exceeds 8. In this case, the total average delay of the centralized solution is at least 40\% lower than those of other conventional policies. Specifically, the centralized solution can utilize the efficiency of collaboration among the MENs, and thus as the number of MENs increases, more connections will be created, thereby better leveraging the collaboration among MENs. We also notice that the delay reduction of the centralized solution gets saturated after the number of MENs reaches a given number, indicating that all available contents are cached at MENs. Fig.~\ref{fig:Avg_delay_inc_en} also provides useful information to help the MEC service providers effectively deploy MENs.  

\begin{figure}[!]
	\centering
	\includegraphics[scale=0.5]{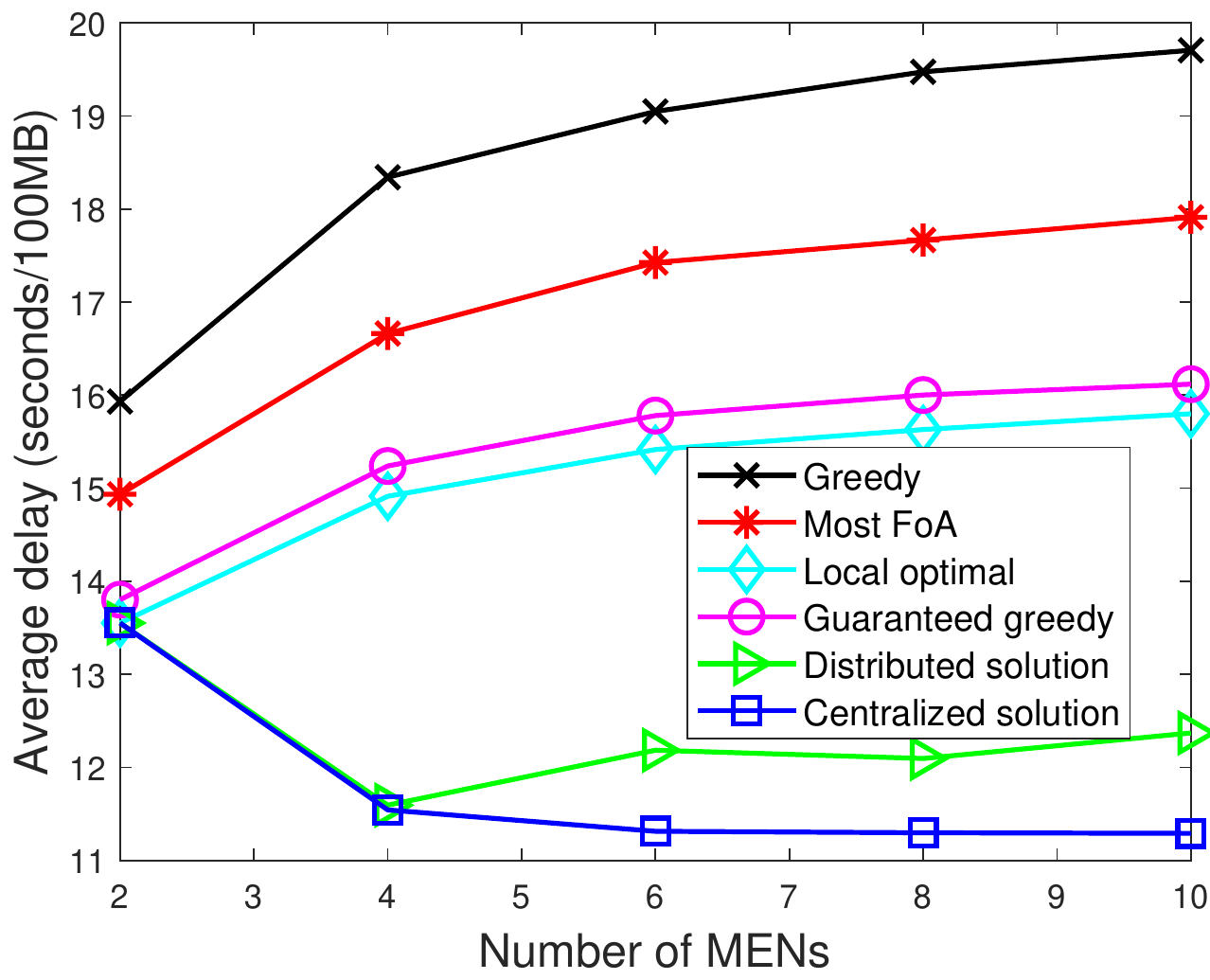}
	\caption{Average delay vs number of MENs.}
	\label{fig:Avg_delay_inc_en}
\end{figure}

%====================================================
%====================================================
\subsection{Cache Hit Rate Probability}

To further evaluate the performance of the proposed solutions, we conduct simulations to show the cache hit rate of the caching policies. Specifically, we generate incoming content requests randomly using a uniform distribution. In addition, we consider two types of the cache hit rate: (1) \emph{local-hit} and (2) \emph{global-hit}. The former represents the cache hit rate at each individual MEN, while the latter captures the cache hit rate from the whole network. Specifically, we denote $h_n$ as the local-hit of MEN-$n$. Then, the average cache hit rate for the whole network of other caching policies (i.e., greedy, most FoA, guaranteed greedy, and locally optimal policies) $h_{tot}$ is
\begin{equation}
\label{eqn:NR1}
\begin{aligned}
h_{tot} \defeq \frac{1}{N}\bigg[\sum_{n=1}^{N-1} h_n + N h_N\bigg],
\end{aligned}
\end{equation}
and of the proposed solutions $h^\ast_{tot}$ is
\begin{equation}
\label{eqn:NR2}
\begin{aligned}
h^\ast_{tot} \defeq \frac{1}{N}\sum_{n=1}^N \Big(h_n + \sum_{m \in \mathcal{M}_n^i, m\neq n} h_m\Big), \forall i \in \mathcal{I},
\end{aligned}
\end{equation}
where $h_N$ and ${{\underset{m \in \mathcal{M}_n^i, m\neq n}{\sum}}} h_m$ are the global-hit for the conventional policies and the proposed solutions, respectively.

\begin{figure*}[!]
\begin{center}
		 \hspace*{-.3cm}$\begin{array}{ccc} 
		\epsfxsize=2.05 in \epsffile{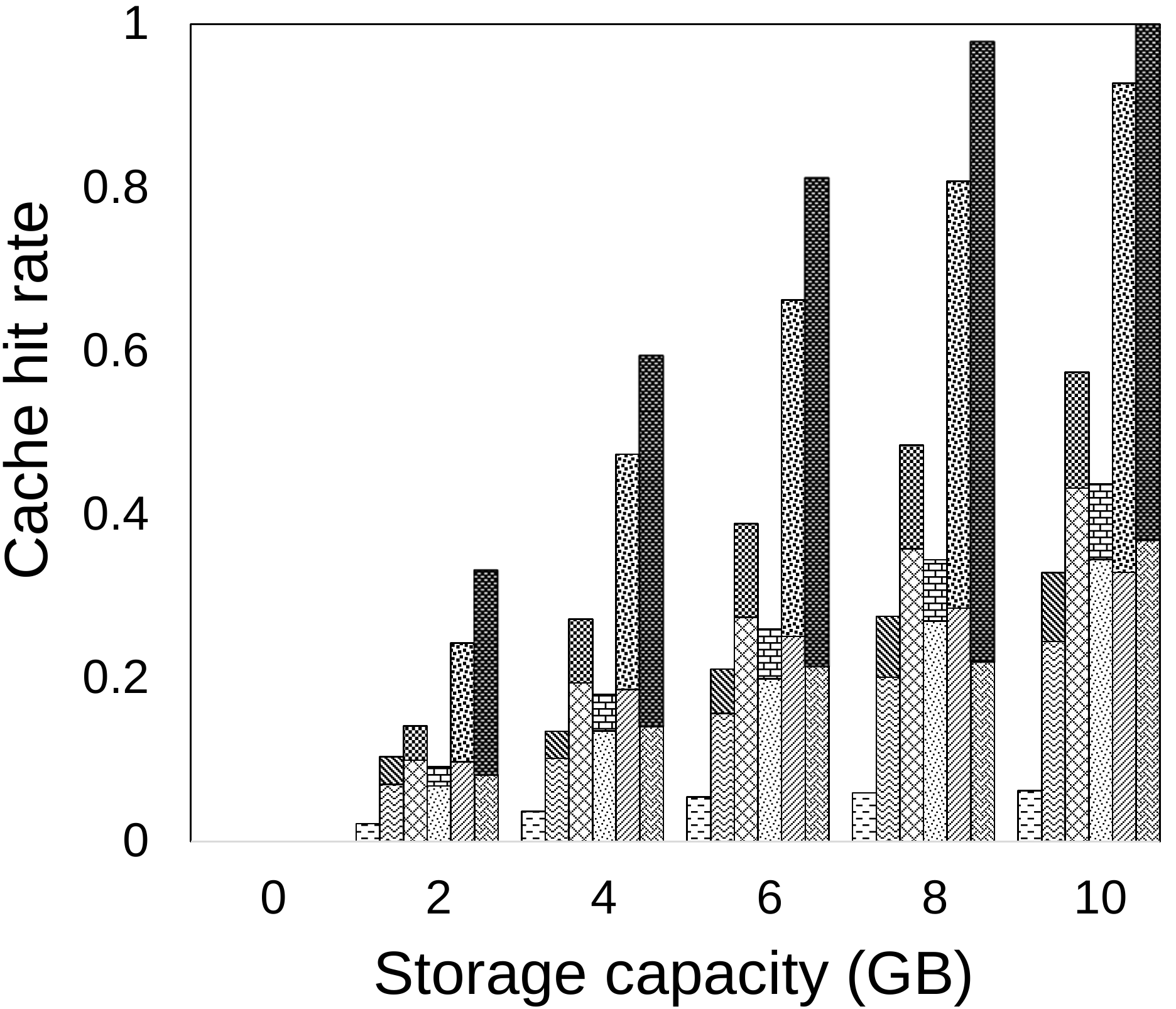} & 
		\epsfxsize=2.05 in \epsffile{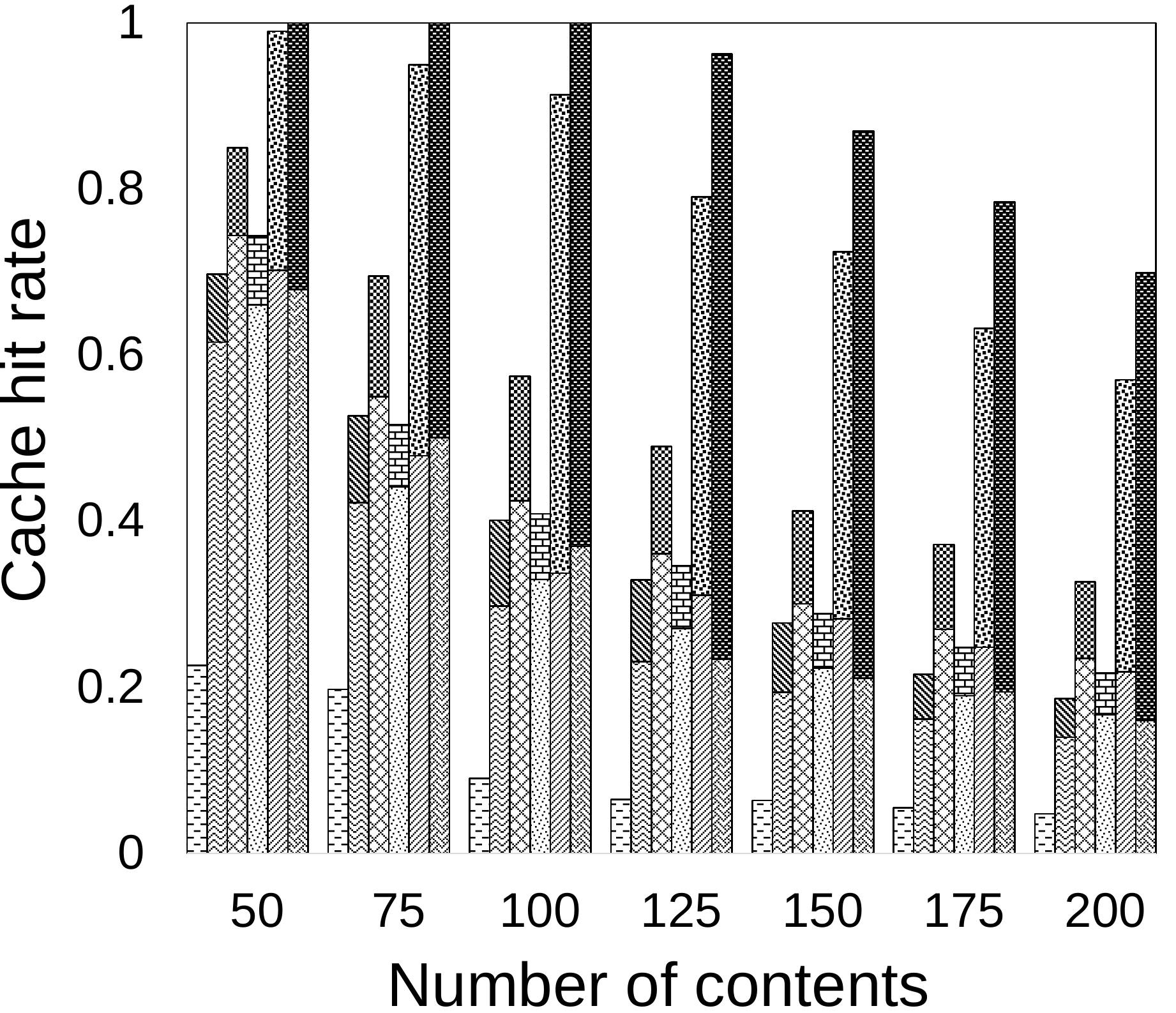} & 
	     \epsfxsize=2.47 in \epsffile{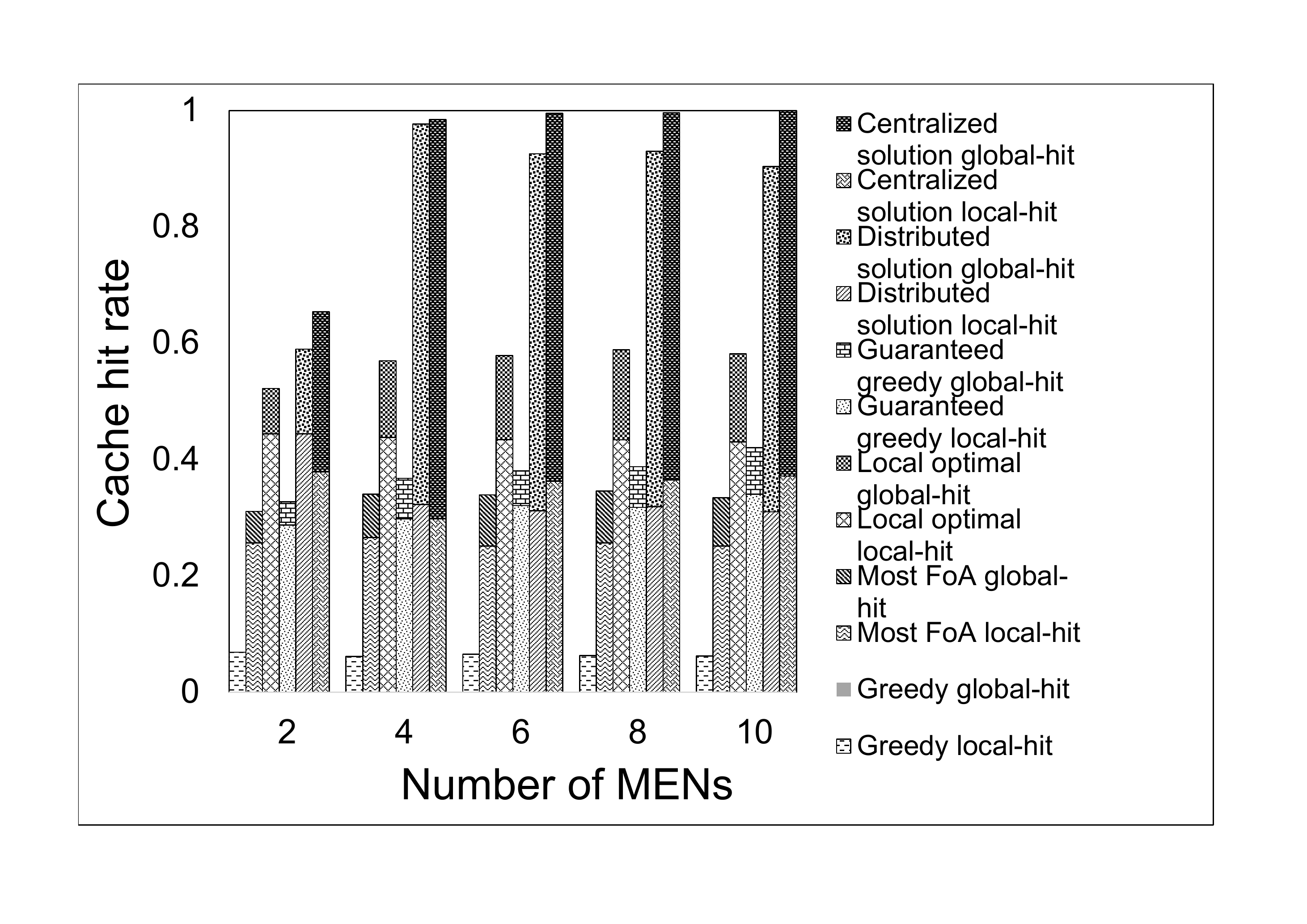} \\  [0cm]
		\text{\footnotesize (a)} & \text{\footnotesize (b)} & \text{\footnotesize (c)} \\
		\end{array}$
		\caption{Average cache hit rate when (a) the storage capacity increases, (b) the number of contents increases, and (c) the number of MENs increases}
		\label{fig:Avg_cache_hit}
	\end{center}
\end{figure*}

Fig.~\ref{fig:Avg_cache_hit} presents the average cache hit rates for all scenarios. Note that, for the greedy policy we only show the local-hit because the greedy policy is based on the sizes of contents only, and thus the MENs and the BS have the same set of cached contents. As observed in Fig.~\ref{fig:Avg_cache_hit}, the average cache hit rates of most FoA, guaranteed greedy, and locally optimal policies are greater than that of the greedy policy because they have less duplicate contents on the MENs. In particular, the MENs may have different users' demands, and thus they may cache dissimilar contents. For the distributed and centralized solutions, by leveraging the direct horizontal collaboration, their cache hit rates (including local-hit and global-hit) are 3 - 4 times greater than those of other policies. These figures also clearly show impacts of the collaborations among MENs. In particular, although the local-hit may not always be the best (because MENs may sacrifice to cache contents for other nodes to minimize the average delay for the whole network), the total cache hit rate for the network obtained by the centralized solution always achieves the highest value. It is also worth noting that when the total cache hit rate in the network increases, the traffic load on the backhaul network will be reduced.

%\begin{figure}[!]
%	\centering
%	\includegraphics[scale=0.27]{Figs/cache_hit_rate_cap_inc}
%	\caption{Cache hit rate vs storage capacity}
%	\label{fig:Cache_hit_inc_cap}
%\end{figure}
%
%\begin{figure}[!]
%	\centering
%	\includegraphics[scale=0.27]{Figs/cache_hit_rate_con_inc}
%	\caption{Cache hit rate vs number of contents}
%	\label{fig:Cache_hit_inc_con}
%\end{figure}
%
%\begin{figure}[!]
%	\centering
%	\includegraphics[scale=0.27]{}
%	\caption{Cache hit rate vs number of MENs}
%	\label{fig:Cache_hit_inc_en}
%\end{figure}

%==========================================================================================
%==========================================================================================
\section{Conclusion}
\label{sec:Conc}
In this paper, we have introduced the effective joint cooperate caching and delivering framework (JOCAD) by leveraging direct horizontal cooperations among MENs. This framework aims to minimize the total average delay for the MEC network and lessen the network traffic on the backhaul network. To address the optimal joint caching and delivering problem, we have proposed the novel transformation method together with the improved branch-and-bound algorithm with the interior-point method. To reduce the complexity and communication overheads among MENs, we have also introduce distributed cooperative caching{-delivering} solution which minimizes the {duplicate contents} among directly connected MENs. Through the simulation results, we have shown that the proposed solutions can significantly outperform other caching policies in terms of the total average delay and cache hit rate. Furthermore, the results can provide useful knowledge for MEC service providers to tradeoff between the quality of service and the implementation costs in the MEC network. To further guarantee the performance, in the future work, we will develop effective distributed algorithms which can theoretically show the bounds and compare their performance with the current proposed solutions.

% use section* for acknowledgment
%\ifCLASSOPTIONcompsoc
%  % The Computer Society usually uses the plural form
%  \section*{Acknowledgments}
%\else
%  % regular IEEE prefers the singular form
%  \section*{Acknowledgment}
%\fi

%The authors would like to thank...

% Can use something like this to put references on a page
% by themselves when using endfloat and the captionsoff option.
\ifCLASSOPTIONcaptionsoff
  \newpage
\fi

\vspace{0.7cm}

\clearpage
\appendices

%==========================================================================================
%==========================================================================================
\section{Proof of Lemma 1}
\label{appx:lemma1}
To prove this Lemma, we prove that one special case of the problem is NP-Complete. %In problem ($\mathbf{P}_1$), we have $\nu$ (i.e., $N \times I$) number of binary variables along with $\mu$ number of linear constraints. To find the optimal solution, there are $2^\nu$ possible decisions, and thus we obtain exponential complexity $O(2^\nu\text{poly}(\nu,\mu))$. 
%Since the inner minimization problem always exists to check $M_n^i$ number of MENs which are directly connected to each MEN-$n$ and containing content $i$, and the average delays of downloading all contents for all MENs are summarized for $N \times I$, we also obtain $O(M_n^i \times N \times I)$ or $O(\nu^2)$ running time. Consequently, the total complexity now is $O(\nu^2 \times 2^\nu\text{poly}(\nu,\mu))$, and thus the optimization problem ($\mathbf{P}_1$) cannot be solved in polynomial time.
Specifically, in the problem ($\mathbf{P}_1$), let's not take into account the delivering problem. In particular, if MEN-$n$ does not have the requested content, it will download the content directly from the CS via the BS. Then, the problem ($\mathbf{P}_1$) can be formulated as follows:
\begin{equation}
\label{eqn:proof1b}
\begin{aligned}
\min_{\mathbf{x}} F(\mathbf{x}) = \min_{\mathbf{x}} \sum_{i=1}^I \sum_{n=1}^{N} f_n^i \Bigg[ \sum_{u=1}^{U_n} \bigg(x_n^i d_\alpha + (1 - x_n^i)d_{\delta}\bigg)\Bigg].
\end{aligned}
\end{equation}
Based on Eq.~(\ref{eqn:proof1b}), the problem becomes binary linear programming {(with $I \times N$ number of binary variables)} which was proved to be NP-complete in \cite{Karp:1972} and \cite{Karp2:1972}. Since the special case problem is at least as hard as the hardest problems in NP (i.e., NP-complete problem), the nested dual binary nonlinear programming problem ($\mathbf{P}_1$) is NP-hard.

%==========================================================================================
%==========================================================================================
\section{Proof of Theorem 1}
\label{appx:theorem1}
In the following, we will prove that if $\mathbf{x^*}$ is the optimal solution of ($\mathbf{P}_1$), then it is also the optimal solution of ($\mathbf{P}_2$) and vice versa. We first recall the inner minimization $F_{\emph{\mbox{IL}}}(\mathbf{x})$ in ($\mathbf{P}_1$) as follows:
\begin{equation}
\label{eqn:theorem1a}
\begin{aligned}
\underset{m}{\min}\Big([x_m^i + (1 - x_m^i)V] \frac{c_i}{l^m_n}\Big) \defeq \\
\underset{m}{\min}\Big(Q_n^i(x_1^i),\ldots,Q_n^i(x_m^i),\ldots,Q_n^i(x_{M_n^i}^i)\Big),
\end{aligned}
\end{equation}
where $Q_n^i(x_m^i) \defeq [x_m^i + (1 - x_m^i)V] \frac{c_i}{l^m_n}$ and $M_n^i$ is the total number of directly connected MENs containing content $i$ for MEN-$n$, with $\forall m \in \mathcal{M}_n^i \subset \mathcal{N}, \forall n \in \mathcal{N}, m \neq n, \forall i \in \mathcal{I}$. Then we can rewrite the objective function in Eq.~(\ref{eqn:form_problem1a}) into
\begin{equation}
\label{eqn:theorem1b}
\begin{aligned}
&\min_{\mathbf{x}} F(\mathbf{x}) = \min_{\mathbf{x}} \sum_{i=1}^I \sum_{n=1}^{N} f_n^i \Bigg[ \sum_{u=1}^{U_n} \bigg( \Omega \big(\mathbf{x}\big) +  \\
&\Phi \big(\mathbf{x}\big)\min_{m}\Big(Q_n^i(x_1^i),\ldots,Q_n^i(x_m^i),\ldots,Q_n^i(x_{M_n^i}^i)\Big)\bigg) \Bigg], 
\end{aligned}
\end{equation}
where $\mathbf{x} = (x_n^i, x_m^i)$ with $\forall n,m \in \mathcal{N}, m \neq n, \forall i \in \mathcal{I}$. Suppose that $\mathbf{x^*} = ({\hat x}_n^i,{\hat x}_m^i), \forall n, m \in \mathcal{N}, m \neq n, \forall i \in \mathcal{I}$ is the optimal solution of $F(\mathbf{x})$ in Eq.~(\ref{eqn:theorem1b}). Without loss of generality, $m=k_n^i$ will be selected as the MEN which represents the minimum delivering decision to download content $i$ for MEN-$n$ if the following condition satisfies:
\begin{equation} 
\label{eqn:theorem1g0}
\begin{aligned}
Q_n^i({\hat x}_{m=k_n^i}^i) \leq Q_n^i({\hat x}_{m\neq k_n^i}^i), \\
\forall n, m, k_n^i \in \mathcal{N}, m \neq n,k_n^i \neq n, \forall i \in \mathcal{I}.
\end{aligned}
\end{equation}
Hence, $\underset{\mathbf{x}}{\min}\text{ }F(\mathbf{x})$ can achieve the optimal objective function as follows:
\begin{equation} 
\label{eqn:theorem1g1}
\begin{aligned}
F(\mathbf{x^*}) &= \sum_{i=1}^I \sum_{n=1}^{N} f_n^i \Bigg[\sum_{u=1}^{U_n} \bigg(\Omega \big(\mathbf{x^*}\big) + \Phi \big(\mathbf{x^*}\big)Q_n^i({\hat x}_{m=k_n^i}^i)\bigg)\Bigg].
\end{aligned}
\end{equation}

Next, from MINLP problem ($\mathbf{P}_2$), we can also obtain the following expression:
\begin{equation}
\label{eqn:theorem1g}
\begin{aligned}
&\min_{\{\mathbf{x},\mathbf{y},\mathbf{z}\}} F(\mathbf{x},\mathbf{y},\mathbf{z}) = \\
&\min_{\{\mathbf{x},\mathbf{y},\mathbf{z}\}} \sum_{i=1}^I \sum_{n=1}^{N} f_n^i \Bigg[ \sum_{u=1}^{U_n} \bigg( \Omega \big(\mathbf{x}\big) +  \Phi \big(\mathbf{x}\big)z_n^i\bigg) \Bigg],
\end{aligned}
\end{equation}
\begin{eqnarray}
\text{s.t. (\ref{eqn:prop_sol2c})-(\ref{eqn:prop_sol2f})}. \nonumber
\end{eqnarray}
Given the optimal solution $\mathbf{x^*} = ({\hat x}_n^i,{\hat x}_m^i), \forall n, m \in \mathcal{N}, m \neq n, \forall i \in \mathcal{I}$ from the Eq.~(\ref{eqn:theorem1b}), we can rewrite the Eq.~(\ref{eqn:theorem1g}) by:
\begin{equation}
\label{eqn:theorem1g2}
\begin{aligned}
&\min_{\{\mathbf{x^*},\mathbf{y},\mathbf{z}\}} F(\mathbf{x^*},\mathbf{y},\mathbf{z}) = \\ 
&\min_{\{\mathbf{x^*},\mathbf{y},\mathbf{z}\}} \sum_{i=1}^I \sum_{n=1}^{N} f_n^i \Bigg[ \sum_{u=1}^{U_n} \bigg( \Omega \big(\mathbf{x^*}\big) +  \Phi \big(\mathbf{x^*}\big)z_n^i\bigg) \Bigg], \quad \\ 
&\text{s.t.} \quad
\left\{	\begin{array}{ll}
z_n^i \geq Q_n^i({\hat x}_m^i) - V y_m^i, \forall n,m \in \mathcal{N}, m\neq n,\forall i \in \mathcal{I}, \\
\sum\limits_{m \in \mathcal{M}_n^i} y_m^i = M_n^i - 1, \forall n \in \mathcal{N},\forall i \in \mathcal{I},\\
y_m^i \in \{0,1\}, \forall m \in \mathcal{N}, \forall i \in \mathcal{I},\\
z_n^i \in \mathbb{R}_0^+, \forall n \in \mathcal{N}, \forall i \in \mathcal{I}.
\end{array}	\right.\\
\end{aligned}
\end{equation}
Then, consider all feasible values of $y_m^i$ in Eq.~(\ref{eqn:theorem1g2}). Based on the constraints $\sum\limits_{m \in \mathcal{M}_n^i} y_m^i = M_n^i - 1$, there are one MEN-$m$ with $y_m^i = 0$ and $M_n^i-1$ MENs with $y_m^i = 1$. Suppose that $m = \kappa_n^i$, where $\forall \kappa_n^i \in \mathcal{M}_n^i,\forall n \in \mathcal{N},\forall i \in \mathcal{I}$, is the current MEN which has $y_{m=\kappa_n^i}^i = 0$. Thus, $z_n^i \geq Q_n^i({\hat x}_{m=\kappa_n^i}^i)$ for $m = \kappa_n^i$ and $z_n^i \geq Q_n^i({\hat x}_{m\neq \kappa_n^i}^i) - V$ for $m \neq \kappa_n^i$. Since $V$ is a very big value, we can eliminate the rest of the constraints when $m \neq \kappa_n^i$. As a result, only $z_n^i \geq Q_n^i({\hat x}_{m=\kappa_n^i}^i), \forall n \in \mathcal{N},\forall i \in \mathcal{I}$ is applied for each possible optimization as follows:
\begin{equation}
\label{eqn:theorem1h}
\begin{aligned}
&\min_{\{\mathbf{x^*},\mathbf{z}\}} F(\mathbf{x^*},\mathbf{z}) = \\
&\min_{\{\mathbf{x^*},\mathbf{z}\}}\sum_{i=1}^I \sum_{n=1}^{N} f_n^i \Bigg[ \sum_{u=1}^{U_n} \bigg( \Omega \big(\mathbf{x^*}\big) +  \Phi \big(\mathbf{x^*}\big)z_n^i\bigg) \Bigg], \\
&\text{s.t.} \quad z_n^i \geq Q_n^i({\hat x}_{m=\kappa_n^i}^i), \forall n \in \mathcal{N},\forall i \in \mathcal{I}.
\end{aligned}
\end{equation}
If $\mathbf{z^*} = {\hat z}_n^i$, $\forall n \in \mathcal{N}, \forall i \in \mathcal{I}$ is also the optimal solution, then the possible optimal objective value $F(\mathbf{x^*},\mathbf{z^*})$ for each $\kappa_n^i$ is
\begin{equation}
\label{eqn:theorem1h2}
F(\mathbf{x^*},\mathbf{z^*}) = \sum_{i=1}^I \sum_{n=1}^{N} f_n^i \Bigg[ \sum_{u=1}^{U_n} \bigg( \Omega \big(\mathbf{x^*}\big) +  \Phi \big(\mathbf{x^*}\big){\hat z}_n^i\bigg) \Bigg], 
\end{equation}
where ${\hat z}_n^i = Q_n^i({\hat x}_{m=\kappa_n^i}^i),\forall n \in \mathcal{N},\forall i \in \mathcal{I}$. Therefore, we can derive:
\begin{equation}
F(\mathbf{x^*}) = \sum_{i=1}^I \sum_{n=1}^{N} f_n^i \Bigg[ \sum_{u=1}^{U_n} \bigg( \Omega \big(\mathbf{x^*}\big) +  \Phi \big(\mathbf{x^*}\big)Q_n^i({\hat x}_{m=\kappa_n^i}^i)\bigg) \Bigg].
\end{equation}
Based on Eq.~(\ref{eqn:theorem1g0}), from each MEN-$n$ we can choose an MEN-$m$ to download content $i$ with minimum delivering time, i.e., $Q_n^i({\hat x}_{m=k_n^i}^i) \leq Q_n^i({\hat x}_{m\neq k_n^i}^i)$. In other words, the selected node $\kappa_n^i$ in $(\mathbf{P_2})$ is the same as the selected node $k_n^i$ in $(\mathbf{P_1})$, $\forall n \in \mathcal{N}$ and $\forall i \in \mathcal{I}$. As a result, if $\mathbf{x^*}$ is the optimal solution of $(\mathbf{P_1})$, it is also the optimal solution of $(\mathbf{P_2})$.

%Considering the condition in Eq.~(\ref{eqn:theorem1g0}), there exists MENs such that $m = \kappa_n^i = k_n^i, \forall n \in \mathcal{N}$ and $\forall i \in \mathcal{I}$. Hence, without loss of generality, we can achieve final optimal objective value as follows:
%\begin{equation}
%\label{eqn:theorem1j}
%\begin{aligned}
%&F(\mathbf{x^*},\mathbf{z^*}) = \sum_{i=1}^I \sum_{n=1}^{N} f_n^i \Bigg[ \sum_{u=1}^{U_n} \bigg( \Omega \big(\mathbf{x^*}\big) +  \Phi \big(\mathbf{x^*}\big){\hat z}_n^i\bigg) \Bigg], \quad \\
%&\text{s.t.}  \quad {\hat z}_n^i = Q_n^i({\hat x}_{m=k_n^i}^i),\forall n \in \mathcal{N},\forall i \in \mathcal{I}\\
%&= \sum_{i=1}^I \sum_{n=1}^{N} f_n^i \Bigg[ \sum_{u=1}^{U_n} \bigg( \Omega \big(\mathbf{x^*}\big) +  \Phi \big(\mathbf{x^*}\big)Q_n^i({\hat x}_{m=k_n^i}^i)\bigg) \Bigg],
%\end{aligned}
%\end{equation}
%which is the same as $F(\mathbf{x^*})$ in Eq.~(\ref{eqn:theorem1g1}). 

Similarly, if ($\mathbf{x^*},\mathbf{y^*},\mathbf{z^*}$) is the optimal solution of ($\mathbf{P}_2$), we can prove that $\mathbf{x^*}$ is also the optimal solution of ($\mathbf{P}_1$). We recall MINLP problem ($\mathbf{P}_2$) in Eq.~(\ref{eqn:theorem1g}). Given that $\mathbf{x^*} = ({\hat x}_n^i,{\hat x}_m^i)$, $\mathbf{y^*} = {\hat y}_m^i$, and $\mathbf{z^*} = {\hat z}_n^i, \forall n, m \in \mathcal{N}, m \neq n, \forall i \in \mathcal{I}$ are the optimal solutions of ($\mathbf{P}_2$) such that
\begin{equation}
\label{eqn:theorem1k}
\begin{aligned}
{\hat z}_n^i = Q_n^i({\hat x}_{m=\kappa_n^i}^i),\forall n \in \mathcal{N},\forall i \in \mathcal{I},
\end{aligned}
\end{equation}
where $m = \kappa_n^i, \forall n \in \mathcal{N},\forall i \in \mathcal{I}$ is the current MEN which has ${\hat y}_{m=\kappa_n^i}^i = 0$. After ignoring other MENs when $m \neq \kappa_n^i, \forall n \in \mathcal{N},\forall i \in \mathcal{I}$ because of ${\hat y}_{m \neq \kappa_n^i}^i = 1$ and the influence of big value $V$, then we have
\begin{equation}
\label{eqn:theorem1l}
F(\mathbf{x^*},\mathbf{z^*}) = \sum_{i=1}^I \sum_{n=1}^{N} f_n^i \Bigg[ \sum_{u=1}^{U_n} \bigg( \Omega \big(\mathbf{x^*}\big) +  \Phi \big(\mathbf{x^*}\big){\hat z}_n^i\bigg) \Bigg], 
\end{equation}
where ${\hat z}_n^i = Q_n^i({\hat x}_{m=\kappa_n^i}^i),\forall n \in \mathcal{N},\forall i \in \mathcal{I}$, and thus
\begin{equation}
F(\mathbf{x^*}) = \sum_{i=1}^I \sum_{n=1}^{N} f_n^i \Bigg[ \sum_{u=1}^{U_n} \bigg( \Omega \big(\mathbf{x^*}\big) +  \Phi \big(\mathbf{x^*}\big)Q_n^i({\hat x}_{m=\kappa_n^i}^i)\bigg) \Bigg].
\end{equation}

From ($\mathbf{P}_1$), we also have $m=k_n^i$, where $\forall k_n^i \in \mathcal{M}_n^i,\forall n \in \mathcal{N},\forall i \in \mathcal{I}$, as the selected MEN which indicates the minimum delivering decision to download content $i$ for MEN-$n$ if
\begin{equation}
\label{eqn:theorem1m}
\begin{aligned}
Q_n^i({\hat x}_{m=k_n^i}^i) \leq Q_n^i({\hat x}_{m\neq k_n^i}^i), \\
\forall n, m, k_n^i \in \mathcal{N}, m \neq n, k_n^i \neq n, \forall i \in \mathcal{I}.
\end{aligned}
\end{equation}
Considering the condition in Eq.~(\ref{eqn:theorem1k}) and Eq.~(\ref{eqn:theorem1m}), there exists MENs such that $m = \kappa_n^i = k_n^i, \forall n \in \mathcal{N}$ and $\forall i \in \mathcal{I}$. Hence, without loss of generality, we can achieve the final optimal objective value of ($\mathbf{P}_1$) as follows:
\begin{equation}
\label{eqn:theorem1n}
\begin{aligned}
&F(\mathbf{x^*}) = \sum_{i=1}^I \sum_{n=1}^{N} f_n^i \Bigg[ \sum_{u=1}^{U_n} \bigg( \Omega \big(\mathbf{x^*}\big) +  \Phi \big(\mathbf{x^*}\big)Q_n^i({\hat x}_{m=\kappa_n^i}^i)\bigg) \Bigg] \quad \\
&= \sum_{i=1}^I \sum_{n=1}^{N} f_n^i \Bigg[ \sum_{u=1}^{U_n} \bigg( \Omega \big(\mathbf{x^*}\big) +  \Phi \big(\mathbf{x^*}\big)Q_n^i({\hat x}_{m=k_n^i}^i)\bigg) \Bigg],
\end{aligned}
\end{equation}
which is the same as $F(\mathbf{x^*},\mathbf{z^*})$ in Eq.~(\ref{eqn:theorem1l}). Thus, we can conclude that ($\mathbf{P}_1$) is equivalent to ($\mathbf{P}_2$).

\section{Proof of Theorem 2}\label{appx:theorem2}
We adopt this proof from~\cite{Smith:1979} and~\cite{Smith:1984}. Suppose that $T_{\rho}$ specifies the expected number of total average delays searched in the depth $\rho$. Then, we obtain that $T_0=1$ at root problem ($\mathbf{RP}$) and $T_{\rho} \geq 1$ for $\rho \geq 1$, i.e., at subproblems ($\mathbf{SP}$s). Thus, we can derive $T$ by
\begin{equation}
\label{eqn:prop_sol7a}
\begin{aligned}
T = 1 + \sum_{\rho=1}^\varrho \geq 1 + \sum_{\rho=1}^\varrho 1 \geq 1 + \varrho.
\end{aligned}
\end{equation}
Furthermore, there exists a constant $\varsigma$ such that $T_{\varrho} \leq \varsigma \leq T_{\infty}$ for all $\varrho$, where $T_{\infty} = \underset{\varrho\to\infty}{\lim}T_{\varrho}$, and $T_0 \leq T_1 \leq \ldots \leq T_\varrho \leq \varsigma$. Then, we can express that
\begin{equation}
\label{eqn:prop_sol7b}
\begin{aligned}
T \leq 1 + \sum_{\rho=1}^\varrho \varsigma \leq 1 + \varsigma\varrho.
\end{aligned}
\end{equation}
From Eq.~(\ref{eqn:prop_sol7a}) and Eq.~(\ref{eqn:prop_sol7b}), it implies that $1+\varrho \leq T \leq 1 + \varsigma\varrho$. Alternatively, we can state that $T$ is linear (or polynomial) at the depth $\varrho$.

%\section{Proof of Theorem 3}\label{appx:theorem3}
%We adopt this proof from~\cite{Kugel:2012}. Given that $q_\rho(\Psi_t) \leq \varrho^\upsilon \ell_{\rho}(\Psi_t)$ for $1 \leq t \leq T$, $0 \leq \rho \leq \varrho$, then the expected number of selected ($\mathbf{SP}$)s is:
%\begin{equation}
%\label{eqn:prop_sol11}
%\begin{aligned}
%&\sum_{\rho=0}^\varrho\mathbb{E}\Big[\omega | F_{\mathbf{SP}_{\omega}^{\rho}}(\mathbf{x},\mathbf{y},\mathbf{z}) < \min_{0 \leq \upsilon < \omega}F_{\mathbf{SP}_{\upsilon}^{\rho}}(\mathbf{x},\mathbf{y},\mathbf{z})\Big] = \\ &\sum_{\rho=0}^\varrho\sum_{t=1}^T \bigg[q_{\rho}(\Psi_t) \frac{1-(1-\ell_{\varrho}(\Psi_t))^{\varphi^\varrho}}{\ell_{\rho}(\Psi_t)}\bigg] \\
%& \leq \sum_{\rho=0}^\varrho\sum_{t=1}^T \bigg[\varrho^\upsilon \ell_{\rho}(\Psi_t) \frac{1-(1-\ell_{\varrho}(\Psi_t))^{\varphi^\varrho}}{\ell_{\rho}(\Psi_t)}\bigg] \\
%& \leq \sum_{\rho=0}^\varrho\sum_{t=1}^T \varrho^\upsilon = T (\varrho+1)\varrho^\upsilon.
%\end{aligned}
%\end{equation}
%Since $T$ is polynomial in $\varrho$, then $T (\varrho+1)\varrho^\upsilon$ is polynomial in $\varrho$.

\section{Proof of Theorem 3}\label{appx:theorem4}
We adopt this proof from~\cite{Fowkes:2013}. Given that the $t$-th total average delay $\Psi_t = \text{arg }\underset{t}{\min}\text{ }  F_{\mathbf{SP}}(\mathbf{x},\mathbf{y},\mathbf{z})$, and $\psi_t$ is a search region to obtain $\Psi_t$ for some $\theta_t < t$. Then, there exists $\phi > 0$ such that for $\eta > 0$ and any $\mathbf{x}$, $\mathbf{y}$, $\mathbf{z}$, we have
\begin{equation}
\label{eqn:prop_sol13}
\begin{aligned}
F_{\mathbf{SP}}(\mathbf{x},\mathbf{y},\mathbf{z}) \leq \phi \implies \beta_U(\Psi_t) - \beta_L(\Psi_t) \leq \eta.
\end{aligned}
\end{equation}
Then, $\psi_t$ should also have $F_{\mathbf{SP}}(\mathbf{x},\mathbf{y},\mathbf{z}) \leq \phi$ when $t = {\hat t} \in \mathbb{N}$. Based on Eq.~(\ref{eqn:prop_sol13}), we can obtain
\begin{equation}
\label{eqn:prop_sol14}
\begin{aligned}
\beta_U(\psi_{\hat t}) - \beta_L(\psi_{\hat t}) \leq \eta. 
\end{aligned}
\end{equation}
Since $\psi_{\hat t}$ is split (to create two new subproblems) at step $\theta_{\hat t}$, then $\beta_L(\psi_{\hat t}) = \beta_L^{(\theta_{\hat t})}$. As a result, the condition becomes
\begin{equation}
\label{eqn:prop_sol15}
\begin{aligned}
\beta_U^{(\theta_{\hat t})} - \beta_L^{(\theta_{\hat t})} \leq \beta_U(\psi_{\hat t}) -  \beta_L^{(\theta_{\hat t})} \leq \eta, 
\end{aligned}
\end{equation}
and thus $\beta_U^{(\theta_{\hat t})} \leq \beta_U(\psi_{\hat t})$. Next, consider that the final total average delay ${\hat \Psi}$ is obtained at $\mathbf{x}^*$, $\mathbf{y}^*$, and $\mathbf{z}^*$. To satisfy a condition that $\beta_L^{(\theta_{\hat t})} \leq {\hat \Psi}$, the following expression can be obtained based on Eq.~(\ref{eqn:prop_sol15}):
\begin{equation}
\label{eqn:prop_sol16}
\begin{aligned}
\beta_U^{(\theta_{\hat t})} - {\hat \Psi} \leq \beta_U^{(\theta_{\hat t})} -  \beta_L^{(\theta_{\hat t})} \leq \eta,
\end{aligned}
\end{equation}
where $\beta_U^{(\theta_{\hat t})}$ is shown within the optimality tolerance $\eta$ of the $F(\mathbf{x},\mathbf{y},\mathbf{z})$.

% that's all folks
\end{document}